\newtheorem{myDef}{Definition}
\newtheorem{theorem}{Theorem}
\begin{document}
%
\title{Hypergraph Spectral Analysis and Processing in 3D Point Cloud}
%
%
%
\author{Songyang~Zhang, Shuguang~Cui,~\IEEEmembership{Fellow,~IEEE},
        and~Zhi~Ding,~\IEEEmembership{Fellow,~IEEE}}


\maketitle

\begin{abstract}
Along with increasingly popular virtual reality applications, 
the three-dimensional (3D) point cloud 
has become a fundamental data structure to characterize 3D objects and surroundings. 
To process 3D point clouds efficiently, a suitable model for the underlying structure and outlier noises is always critical. In this work, we propose a hypergraph-based new 
point cloud model that is amenable to efficient analysis and processing. 
We introduce tensor-based methods to estimate hypergraph spectrum components 
and frequency coefficients of point clouds in both ideal and noisy settings. 
We establish an analytical connection between hypergraph frequencies and structural features.
We further evaluate the efficacy of hypergraph spectrum estimation in two common
point cloud applications of sampling and denoising for which also
we elaborate specific hypergraph filter design and spectral properties.
The empirical performance demonstrates the strength of hypergraph signal processing as a tool in 3D point clouds and the underlying properties.

\end{abstract}

\begin{IEEEkeywords}
3D point clouds, hypergraph signal processing, hypergraph construction, denoising, sampling.
\end{IEEEkeywords}

\section{Introduction}\label{intro}
Recent developments in depth sensors and softwares make it easier to capture the features and create a three-dimensional (3D) model for an object and its surroundings\cite{c1}. In particular, with the low-cost scanners such as light detection and ranging (LIDAR) and Kinect, a new data structure known as the point cloud has achieved significant success in many areas, including virtual reality, geographic information system, reconstruction of art document and high-precision 3D maps for self-driving cars \cite{c2}. 
A point cloud consists of 3D coordinates with attributes such as color, 
temperature, texture, and depth \cite{c3}. Owing to the easy access to scanning
sensors and the huge need in describing the 3D features, the use of point clouds 
has attracted significant attentions in areas of computer vision, virtual reality, 
and medical science. How to process the point clouds efficiently 
becomes an important topic of research in many 3D imaging and vision systems.

To analyze the features of point cloud, the first step is to construct an analytical model 
to represent the 3D structures. The literature provides several different models. 
In \cite{c4}, the 3D space is partitioned into several boxes or voxels, and the point clouds are then discretized therein. One disadvantage of voxels is that a dense grid 
is required to achieve fine resolution, leading to spatial inefficiency~\cite{c3}. 
A spatially efficient approach \cite{c5,c6} is the octree representation of point clouds. 
An octree is a tree data structure in which each node has exactly eight children. 
It can partition a 3D space recursively, and represent the point clouds 
with partitioned boxes. Although efficient, octree suffers from 
discretization errors \cite{c3}. The bd-tree is another spatial 
decomposition technique and is robust in highly cluttered point 
cloud dataset. However, compared to octree structures, 
bd-trees are more difficult to update. 

Recently, graphs and graph signal processing (GSP) have
found applications in modeling point clouds. For example, the authors of
\cite{c3} construct a graph based on pairwise point distances. 
Some other works, such as \cite{c8}, construct graphs based on the $k$-nearest neighbors, 
where each vertex (point) has an edge connection to its $k$ nearest neighbors. 
There are several clear connections between graph features and point cloud characteristics. 
For example, the smoothness over a graph can describe the flatness of surfaces 
in point clouds.
GSP-based tools such as filters and graph learning methods can 
process the point clouds and have shown great success because of the graph model's 
ability to capture the underlying geometric structures. 
However, graph-based methods still face some challenges such as limited orders 
and measurement inefficiency. In a traditional graph, 
each edge can only connect two nodes, constraining graph-based models
to describe only pairwise relationships. However, a multilateral 
relationship among multiple nodes is far more informative as in a point cloud model. 
For example, the points (nodes) on the same surface of 
a point cloud exhibit a strong multilateral relationship, 
which cannot be  easily captured by an edge of a traditional graph. 
In fact, construction of an efficient graph for a given dataset is always an 
open question. Thus, studies on point clouds can benefit from more general
and efficient models. 

To develop an efficient model for point clouds, we explore a high-dimensional graph model, known as hypergraph \cite{c9}. Hypergraph can be a useful model in processing 3D point clouds.
A hypergraph $\mathcal{H}=\{\mathcal{V},\mathcal{E}\}$ 
consists of a set of nodes $\mathcal{V}=\{\mathbf{v}_1, \dots,\mathbf{v}_K\}$ and a set of 
hyperedges $\mathcal{E}=\{\mathbf{e}_1, \dots,\mathbf{e}_K\}$. Each hyperedge in a 
hypergraph can 
connect more than two nodes. For example, a 3D shape together with its hypergraph model 
are shown as Fig. \ref{hyper}. Obviously, a normal graph is a special case of hypergraph, 
where each hyperedge degrades to connect two nodes exactly. The hyperedge in a hypergraph can
characterize the multilateral relationship among several related nodes (e.g., on a surface),
thereby making hypergraph a natural and intuitive model for point clouds. 
Moreover, advances in hypergraph signal processing (HGSP) \cite{c9} are
providing more hypergraph tools, such as HGSP-based filters and spectrum analysis,  
for effective point cloud processing.  

\begin{figure}[htbp]
	\subfigure[A 3D Shape with Six Nodes.]{
	\label{hg1}		
	\centering
	\includegraphics[height=1.4in]{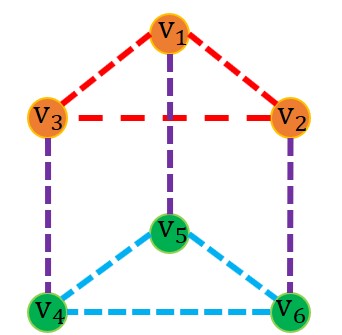}
}
	\subfigure[Hypergraph Model with 5 hyperedges and 6 Nodes:
	$\mathbf{e}_1=\{\mathbf{v}_2,\mathbf{v}_3,\mathbf{v}_4,\mathbf{v}_6\}$,
    $\mathbf{e}_2=\{\mathbf{v}_1,\mathbf{v}_3,\mathbf{v}_4,\mathbf{v}_5\}$,
    $\mathbf{e}_3=\{\mathbf{v}_1,\mathbf{v}_2,\mathbf{v}_3\}$,
    $\mathbf{e}_4=\{\mathbf{v}_4,\mathbf{v}_5,\mathbf{v}_6\}$,
    $\mathbf{e}_5=\{\mathbf{v}_1,\mathbf{v}_2,\mathbf{v}_5,,\mathbf{v}_6\}$.]{
	\label{hg2}		
	\centering
	\includegraphics[height=1.4in]{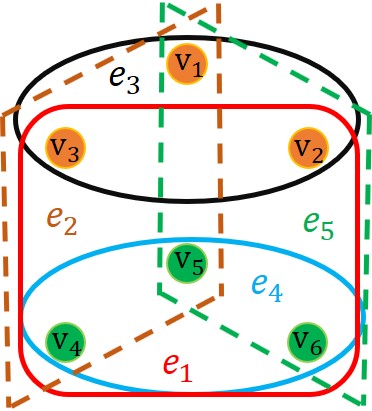}
}
\caption{Example of Hypergraphs.}
\label{hyper}
\end{figure}

However, processing the point clouds based on hypergraph still poses 
several challenges. Similar to GSP, the first problem lies in the 
construction of hypergraph for point clouds. 
The traditional hypergraph construction method for a general dataset relies on data structure. 
For example, in \cite{c11}, a hypergraph model is constructed according to 
the sentence structure in natural language processing. 
The $k$-nearest neighbor model is another method to construct the hypergraph. 
In \cite{c9}, a hypergraph can be formed from the feature distances for an 
animal dataset to achieve clustering. 
However, such distance-based or structure-based model may be rather lossy in  
information preservation. 
For example, the structure-based method may not preserve the correlation of 
some irregular structures, whereas the $k$-nearest neighbor method 
may narrowly emphasize the distance information. 
In addition to hypergraph construction, 
another issue in analyzing point cloud with hypergraph tools
is the computation complexity of the spectrum space. 
In the HGSP framework, spectrum-based analysis plays an important role but 
needs to compute the 
spectrum space. Usually, the computation of hypergraph spectrum 
is based on orthogonal-CP decomposition, which incurs high-complexity when
there are many nodes. Another challenge in point cloud processing
is the effect of noise and outliers. Since a hypergraph model is constructed 
from observed data, noise can distort the hypergraph and
degrade the performances of HGSP. Thus, mitigating
noise effect and robustly estimating the hypergraph model for point clouds
pose a significant challenge.

This work addresses the aforementioned problems. 
We propose novel spectrum-based hypergraph construction methods for 
both clean and noisy point clouds. For clean point clouds, we 
first estimate their spectrum components based on the hypergraph stationary process 
and optimally determine their frequency coefficients based on 
smoothness to recover the original hypergraph structure. 
For noisy point clouds, we introduce a method for joint hypergraph structure estimation and data denoising. 
We shall illustrate the effectiveness of the proposed
hypergraph construction and spectrum estimation in two point clould
applications: sampling and denoising. 
Our experimental results clearly establish 
a connection between hypergraph frequencies and point cloud features. 
The  performance improvement in both applications demonstrates
the strength and power of hypergraph in point cloud processing
and the practical value of our estimation methods.

We organize 
the rest of the paper as follows. In Section \ref{pre}, we lay the foundation
with respect to the 
preliminaries and notations of point clouds, tensor basics and hypergraph signal processing. Next, we propose means in estimating hypergraph spectrum 
for basic point clouds in Section \ref{h1} and further develop
means for hypergraph structure estimation of noisy point clouds in 
Section \ref{h2}. With the proposed estimation methods, we 
study two important application scenarios and establish the
effectiveness of hypergraph signal processing in Section \ref{appli}.
Finally, we present the conclusion and future directions in Section \ref{con}.

\section{Preliminaries and Notations}\label{pre}
In this section, we cover basic background with respect to 
point cloud, tensor basics and hypergraph signal processing. 

\subsection{Point Clouds}
A point cloud is a set of 3D points obtained from sensors, 
where each point is attributed with coordinates and other features, 
like colors \cite{c10}. Since the coordinates are  
basic features of a point cloud, in this work, we mainly focus on 
gray-scale point clouds, where each node is characterized by its coordinates. 
We consider a matrix representation of the gray-scale point clouds, 
where a point cloud with $N$ nodes is denoted by a location matrix
\begin{equation}
	\mathbf{s}=[\mathbf{X_1\quad X_2\quad X_3}]=
	\begin{bmatrix}
	\mathbf{s}_1^T\\
	\mathbf{s}_2^T\\
	\ddots\\
	\mathbf{s}_N^T
	\end{bmatrix}\in\mathbb{R}^{N\times 3},
\end{equation}
where $\mathbf{X}_i$ denotes a vector of the $i$th coordinates 
of all the points, and $\mathbf{s}_i$ is the three coordinates of $i$th point. 
With the information of coordinates, different models, such as graphs \cite{c3} and octrees \cite{c5}, can be constructed to analyze the point clouds, for which we will discuss more in Section \ref{appli}.

\subsection{Tensor Basics}
Tensor is a high-dimensional generalization of matrix. A tensor can be 
interpreted as multi-dimensional arrays. The order of tensor is the 
number of indices to label the components of arrays \cite{c17}. 
For example, a scalar is 
a zeroth-order tensor; a vector is a first-order tensor; a matrix is a second-order tensor; and an $M$-dimensional array is an $M$th-order tensor \cite{c18}. 
In this work, an $M$th-order tensor is denoted by $\mathbf{A}\in\mathbb{R}^{I_1
	\times I_2\times \cdots \times I_M}$, whose entry in position $(i_1,i_2,\cdots,i_M)$ is labeled as $a_{i_1\cdots i_M}$. Here, $I_k$ is the dimension of $k$th order. 

Tensor outer product is a widely used operation to construct a 
higher-order tensor from lower-order tensors. The tensor outer product between an $M$th-order tensor $\mathbf{U}\in \mathbb{R}^{I_1\times I_2\times ...\times I_M }$ with entries $u_{i_1 ... i_M}$ and an $N$th-order tensor $\mathbf{V}\in \mathbb{R}^{J_1\times J_2\times ...\times J_N }$ with 
entries $v_{j_1 ... j_N}$ is denoted by 
\begin{equation}
\mathbf{W}=\mathbf{U} \circ \mathbf{V},
\end{equation}
where the result $\mathbf{W}\in \mathbb{R}^{I_1\times I_2\times ...\times I_M \times J_1 \times J_2 \times ... \times J_N}$ is an $(M+N)$th-order tensor with entries
\begin{equation}
w_{i_1 ... i_M j_1 ... j_N}= u_{i_1 ... i_M} \cdot v_{j_1 ... j_N}.
\end{equation}

\subsection{Hypergraph Signal Processing}
Hypergraph signal processing (HGSP) is a tensor-based framework  
\cite{c9}. In the HGSP framework, a hypergraph with $N$ nodes and longest 
hyperedge connecting $M$ nodes, is represented by an $M$-th order $N$-dimension representing tensor $\mathbf{A}=(a_{i_1i_2\cdots i_M})\in\mathbb{R}^{N^M}$. The representing tensor can be adjacency tensor or Laplacian tensor in different purposes \cite{c16}. In this paper, we refer the adjacency tensor as the representing tensor, in which each entry $a_{i_1i_2\cdots i_M}$ indicates whether nodes $\{\mathbf{v}_1,\mathbf{v}_2,\cdots,\mathbf{v}_M\}$ are connected. The computation of the edge weight can be found in \cite{c9}.

With the orthogonal-CP decomposition, the representing tensor can be decomposed 
via 
\begin{equation}\label{decom}
\mathbf{A}=\sum_{r=1}^{N}\lambda_r\cdot\underbrace{\mathbf{f}_r\circ...\circ \mathbf{f}_r}_{\text{$M$ times}},
\end{equation}
where $\mathbf{f}_r$'s are orthonormal basis called spectrum components and $\lambda_r$ are frequency coefficients related to the hypergraph frequency. All the spectrum components $\{\mathbf{f}_1,\cdots,\mathbf{f}_N\}$ construct the hypergraph spectral space. Each pair $(\mathbf{f}_r,\lambda_r)$ is called the spectral pair of the hypergraph.

Given an original signal $\mathbf{s}=[s_1\quad s_2\quad...\quad s_N]^{\mathrm{T}}$, the hypergraph signal is defined as the $(M-1)$ times tensor outer product of $\mathbf{s}$, i.e.,
\begin{equation}
\mathbf{s}^{[M-1]}=\underbrace{\mathbf{s\circ...\circ s}}_{\text{$M-1$ times}}.
\end{equation}

The hypergraph frequency is ordered by the total variation of the spectrum component, which is defined as
\begin{equation}
\mathbf{TV}(\mathbf{\mathbf{f}_r})=||\mathbf{f}_r-\frac{1}{\lambda_{max}}\mathbf{A}\mathbf{f}_r^{[M-1]}||_1,
\end{equation}
where $\mathbf{A}\mathbf{f}_r^{[M-1]}$ is the contraction between representing tensor $\mathbf{A}$ and the hypergraph signal.

A spectrum component with larger total variation is a higher-frequency component, which indicates a faster propagation over the given hypergraph. Moreover, a supporting matrix 
\begin{equation}\label{sup}
\mathbf{P_s}=\frac{1}{\lambda_{\max}}
\begin{bmatrix}
\mathbf{f}_1& \cdots& \mathbf{f}_N
\end{bmatrix}
\begin{bmatrix}
\lambda_1& & \\
&\ddots& \\
& &\lambda_N
\end{bmatrix}
\begin{bmatrix}
\mathbf{f}_1^{\mathrm{T}}\\
\vdots\\
\mathbf{f}_N^{\mathrm{T}}
\end{bmatrix},
\end{equation}
can be defined to capture the overall spectral information of the hypergraph.

Instead of reviewing many properties of HGSP here, other aspects such as hypergraph Fourier transform, hypergraph filter design and sampling theory can be found in \cite{c9}.

\section{Hypergraph Spectrum Estimation for Point Clouds}\label{h1}
To process the 3D point clouds, the first step is to construct an optimal 
hypergraph to model the point clouds. As we mentioned in the Section \ref{intro}, 
it is time-comsuming and inefficient
to first construct a hypergraph structure 
before tensor decomposition to obtain the hypergraph spectrum. 
Instead, we propose to directly estimate the hypergraph spectral pairs 
based on the observed data, and then recover the original representing tensor 
with Eq. (\ref{decom}). In this section, we 
first estimate the hypergraph spectrum components $\mathbf{f}_r$'s based on the hypergraph stationary process, and optimize the frequency coefficients $\lambda_r$'s based on the smoothness for original point clouds.

\subsection{Estimation of Hypergraph Spectrum Components}
In this part, we propose a method to estimate the hypergraph spectral 
components based on the hypergraph stationary process.
 
\subsubsection{Hypergraph Stationary Process}
Before providing details of the estimation, let us first introduce some new definitions and properties necessary for spectrum estimation.

Stationarity is a cornerstone property that facilities the analysis 
of random signals and observations in traditional signal processing \cite{c19}.
It has equal importance in graph and hypergraph signal processing. 
Based on graph shifting introduced in \cite{c22}, a definition of graph stationary
process proposed in \cite{c19} can analyze the properties of the 
different observations of nodes, or the random signals over the graphs. 
Furthermore, \cite{c123} introduces a method to estimate the graph spectrum space and graph diffusion for multiple observations based on the graph stationary process. Similarly, the hypergraph stationary process can be defined to 
estimate hypergraph spectrum. 


Now, let us introduce the definition of the hypergraph stationary process. In \cite{c9}, a polynomial hypergraph filter based on supporting matrix is defined as
\begin{equation}
\mathbf{s}'=\sum_{k=1}^{a}\alpha_k\mathbf{P}^k\mathbf{s},
\end{equation}
where $\mathbf{P}=\lambda_{max}\mathbf{P_s}$.

 Similarly, based on the supporting matrix, a $\tau$-step shifting operation is defined as
$\mathbf{P}_{\tau}=\mathbf{P}^{\tau}$. Then, similar to the definition of the stationary process in traditional digital signal processing and graph signal processing, a strict-sense stationary process in HGSP can be defined as follows.

\begin{myDef}(Strict-Sense Stationary Process)
	A stochastic signal $\mathbf{x}\in\mathbb{R}^N$ is strict-sense stationary over the hypergraph with $\mathbf{P}_\tau$ if and only if
	\begin{equation}
	\mathbf{x}\overset{d}{=}\mathbf{P}_{\tau}\mathbf{x}
	\end{equation}
	holds for any $\tau$.
\end{myDef}

Since the strict-sense stationary is hard to achieve and analyze in the real datasets, we introduce the weak-sense stationary process similar to traditional digital signal processing.

\begin{myDef}(Weak-Sense Stationary Process)
	A stochastic signal $\mathbf{x}\in\mathbb{R}^N$ is weak-sense stationary over the hypergraph with $\mathbf{P}_\tau$ if and only if
	\begin{equation}\label{mean}
	\mathbb{E}[\mathbf{x}]=\mathbb{E}[\mathbf{P}_{\tau}\mathbf{x}]
	\end{equation}
	and
	\begin{equation}\label{time}
	\mathbb{E}[(\mathbf{P}_{\tau_1}\mathbf{x})((\mathbf{P}^H)_{\tau_2}\mathbf{x})^H]=\mathbb{E}[(\mathbf{P}_{\tau_1+\tau}\mathbf{x})((\mathbf{P}^H)_{\tau_2-\tau}\mathbf{x})^H]
	\end{equation}
	hold for any $\tau$,
	where $\mathbb{E}(\cdot)$ refers to the mean of observations and $(\cdot)^H$ is the Hermitian transpose.
\end{myDef}

From the definition of the weak-sense stationary process (WSS), 
Eq. (\ref{mean}) implies that the mean function of the signal must be constant, 
which is the same condition as in traditional digital signal processing
(DSP) \cite{c23}. 
From the definition of supporting matrix, the $(i,j)$-th entry of $\mathbf{P}$ is the same as the $(j,i)$-th entry of $\mathbf{P}^H$, which indicates that $\mathbf{P}^H$ is the shifting in the opposite direction of $\mathbf{P}$. Then, the condition in Eq. (\ref{time}) indicates that the hypergraph covariance function $K_{\mathbf{xx}}({\tau_1},-\tau_2)=K_{\mathbf{xx}}({\tau_1}+\tau,\tau-\tau_2)=K_{\mathbf{xx}}({\tau_1}+\tau_2,0)$, which is also consistent with the definition in 
traditional DSP.

With the definition of the hypergraph stationary process, we have the following properties regarding the relationship between signals and hypergraph spectrum.

\begin{theorem}
	A stochastic signal $\mathbf{x}$ is WSS if and only if it has zero-mean and its covariance matrix has the same eigenvectors as the hypergraph spectrum basis, i.e., 
	\begin{equation}\label{s1}
	\mathbb{E}[\mathbf{x}]=\mathbf{0}
	\end{equation}
	and 
	\begin{equation} \label{s2}
	\mathbb{E}[\mathbf{x}\mathbf{x}^H]=\mathbf{V}\Sigma_\mathbf{x}\mathbf{V}^{H},
	\end{equation}
	where $\mathbf{V}=[\mathbf{f}_1,\mathbf{f}_2,\cdots,\mathbf{f}_N]\in\mathbb{R}^{N\times N}$ are the hypergraph spectrum.
	
\end{theorem}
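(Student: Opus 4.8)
The plan is to move both conditions of weak-sense stationarity into the hypergraph spectral domain by diagonalizing the shift operator. Since the spectrum components $\{\mathbf{f}_1,\dots,\mathbf{f}_N\}$ are orthonormal, the matrix $\mathbf{V}=[\mathbf{f}_1,\dots,\mathbf{f}_N]$ is unitary and $\mathbf{P}=\lambda_{\max}\mathbf{P_s}=\mathbf{V}\Lambda\mathbf{V}^H$ with $\Lambda=\mathrm{diag}(\lambda_1,\dots,\lambda_N)$, so $\mathbf{P}_\tau=\mathbf{P}^\tau=\mathbf{V}\Lambda^\tau\mathbf{V}^H$ and $((\mathbf{P}^H)_\tau)^H=\mathbf{P}^\tau$. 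Writing $\mathbf{C}=\mathbb{E}[\mathbf{x}\mathbf{x}^H]$, the correlation in Eq.~(\ref{time}) then collapses to
\begin{equation}
\mathbb{E}[(\mathbf{P}_{\tau_1}\mathbf{x})((\mathbf{P}^H)_{\tau_2}\mathbf{x})^H]=\mathbf{P}^{\tau_1}\mathbf{C}\,\mathbf{P}^{\tau_2},
\end{equation}
so the whole theorem reduces to: condition~(\ref{time}) holds for all $\tau$ \emph{iff} $\mathbf{P}^{\tau_1}\mathbf{C}\mathbf{P}^{\tau_2}$ depends only on $\tau_1+\tau_2$, \emph{iff} $\mathbf{C}$ commutes with $\mathbf{P}$; and condition~(\ref{mean}) holds for all $\tau$ \emph{iff} $\mathbb{E}[\mathbf{x}]$ is $\mathbf{P}$-invariant.

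For sufficiency ($\Leftarrow$), assume Eqs.~(\ref{s1})--(\ref{s2}). Equation~(\ref{mean}) is immediate since $\mathbb{E}[\mathbf{P}_\tau\mathbf{x}]=\mathbf{P}^\tau\mathbb{E}[\mathbf{x}]=\mathbf{0}=\mathbb{E}[\mathbf{x}]$. For Eq.~(\ref{time}), note that $\mathbf{C}=\mathbf{V}\Sigma_\mathbf{x}\mathbf{V}^H$ is simultaneously diagonalized with $\mathbf{P}$ and hence commutes with it, so $\mathbf{P}^{\tau_1}\mathbf{C}\mathbf{P}^{\tau_2}=\mathbf{C}\mathbf{P}^{\tau_1+\tau_2}$, which is invariant under $(\tau_1,\tau_2)\mapsto(\tau_1+\tau,\tau_2-\tau)$; equivalently, in the spectral basis the $r$-th diagonal entry is $\sigma_r\lambda_r^{\tau_1+\tau_2}$, unchanged by the shift.

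For necessity ($\Rightarrow$), assume $\mathbf{x}$ is WSS. Expanding $\mathbb{E}[\mathbf{x}]=\mathbf{V}\mathbf{c}$ in Eq.~(\ref{mean}) gives $\lambda_r^\tau c_r=c_r$ for every $r$ and every $\tau$, which forces $c_r=0$ whenever $\lambda_r\neq 1$; thus $\mathbb{E}[\mathbf{x}]$ lies in the eigenvalue-one subspace of $\mathbf{P}$, the hypergraph counterpart of the "constant mean," which under the standing normalization of the supporting matrix (equivalently, when $1$ is not a hypergraph frequency) is $\{\mathbf{0}\}$, giving Eq.~(\ref{s1}). For the covariance, instantiate Eq.~(\ref{time}) twice — once with $(\tau_1,\tau_2)=(1,0)$ and once with $(0,1)$, which have equal index sum — to obtain $\mathbf{P}\mathbf{C}=\mathbf{C}\mathbf{P}$. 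Conjugating by $\mathbf{V}$, the Hermitian matrix $\mathbf{V}^H\mathbf{C}\mathbf{V}$ commutes with the diagonal $\Lambda$; if the $\lambda_r$ are distinct this already forces $\mathbf{V}^H\mathbf{C}\mathbf{V}$ to be diagonal, i.e. Eq.~(\ref{s2}), and if some frequencies coincide it is block-diagonal and one diagonalizes each Hermitian block, which is legitimate because any orthonormal basis of a $\mathbf{P}$-eigenspace is an admissible choice of the corresponding columns of $\mathbf{V}$ in the decomposition~(\ref{decom}).

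I expect the main obstacle to be the first-order part: the condition~(\ref{mean}) by itself only yields shift-invariance of the mean, and upgrading this to strict zero-mean requires using that it must hold for \emph{all} $\tau$ together with the fact that $1$ is not an eigenvalue of $\mathbf{P}$ (or, alternatively, accepting "zero-mean" as the correct hypergraph analogue of "constant mean"). A secondary delicate point is the commutation-to-simultaneous-diagonalization step when $\mathbf{P}$ has repeated frequencies, where "same eigenvectors" must be read up to the rotational freedom inside degenerate eigenspaces; and throughout one must track $\mathbf{P}$ versus $\mathbf{P}^H$ and restrict attention to the admissible (non-negative) ranges of the shift exponents so that no inverse of a possibly singular $\mathbf{P}$ is ever invoked.
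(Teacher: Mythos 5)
Your proof follows essentially the same route as the paper's: diagonalize the shift as $\mathbf{P}_\tau=\mathbf{V}\Lambda^\tau\mathbf{V}^T$, push both WSS conditions into the spectral domain, get sufficiency immediately from simultaneous diagonalizability, and reduce necessity to a commutation condition as in the paper's Eq.~(\ref{eig}). Where you genuinely diverge is in how the necessity direction is closed. The paper quantifies over \emph{arbitrary} supporting matrices (arbitrary $\Lambda$ with the spectrum $\mathbf{V}$ held fixed), which makes $\mathbb{E}[\mathbf{x}]=\mathbf{0}$ and the diagonality of $\mathbf{V}^H\mathbb{E}[\mathbf{xx}^H]\mathbf{V}$ follow in one stroke, at the cost of strengthening the hypothesis beyond the stated definition (which fixes a single hypergraph). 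You instead keep $\mathbf{P}$ fixed, extract $\mathbf{PC}=\mathbf{CP}$ from the instantiation $(\tau_1,\tau_2)\in\{(1,0),(0,1)\}$, and patch the two degenerate cases ($1$ an eigenvalue of $\mathbf{P}$, and repeated $\lambda_r$) with explicit side conditions; this is more faithful to the definition and correctly isolates the exact point where the theorem as stated needs an extra assumption. One caution on your repeated-eigenvalue patch: the columns of $\mathbf{V}$ are pinned down by the order-$M$ orthogonal-CP decomposition of $\mathbf{A}$, not merely by the matrix $\mathbf{P}$, and for $M>2$ rotating an orthonormal basis inside a degenerate $\Lambda$-eigenspace does not in general preserve $\sum_r\lambda_r\,\mathbf{f}_r\circ\cdots\circ\mathbf{f}_r$; so ``same eigenvectors up to rotation in degenerate subspaces'' is strictly weaker than Eq.~(\ref{s2}) with the paper's specific $\mathbf{V}$, and that block-diagonal case is only fully resolved under the paper's arbitrary-$\mathbf{P}$ (generically distinct eigenvalues) reading.
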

\begin{proof}
	Since the hypergraph spectrum basis are orthonormal, we have $\mathbf{VV}^T=\mathbf{I}$.
Then, the $\tau$-step shifting based on supporting matrix can be calculated as
	\begin{align}
	\mathbf{P}_\tau&=\underbrace{\mathbf{V}\Lambda_\mathbf{P}\mathbf{V}^T\mathbf{V}\Lambda_\mathbf{P}\mathbf{V}^T\cdots\mathbf{V}\Lambda_\mathbf{P}\mathbf{V}^T}_{\tau\quad times}\\
	&=\mathbf{V}\Lambda_\mathbf{P}^\tau\mathbf{V}^T.\label{poly}
	\end{align}
	
	Now, the Eq. (\ref{mean}) can be written as 
	\begin{equation}
	\mathbb{E}[\mathbf{x}]=\mathbf{V}\Lambda_P^\tau\mathbf{V}^T\mathbb{E}[\mathbf{x}].
	\end{equation}
Since $\mathbf{V}\Lambda_P^\tau\mathbf{V}^T$ does not always equal to $\mathbf{I}$, Eq. (\ref{mean}) holds for arbitrary supporting matrix and $\tau$ if and only if $\mathbb{E}[\mathbf{x}]=\mathbf{0}$. 
	
Next we show the sufficiency and necessity of the condition in Eq. (\ref{s2}). 
	The condition in Eq. (\ref{time}) can be written as
	\begin{equation}
	\mathbf{P}_{\tau_1}\mathbb{E}[\mathbf{xx}^H]((\mathbf{P})^H_{\tau_2})^H=\mathbf{P}_{\tau_1+\tau}\mathbb{E}[\mathbf{xx}^H]((\mathbf{P})^H_{\tau_2-\tau})^H.
	\end{equation}
Considering Eq. (\ref{poly}) and the fact that
hypergraph spectrum is real \cite{c9}, Eq. (\ref{time}) is equivalent to
	\begin{equation}
	\mathbf{V}\Lambda_\mathbf{P}^{\tau_1}\mathbf{V}^H\mathbb{E}[\mathbf{xx}^H]\mathbf{V}\Lambda_\mathbf{P}^{\tau_2}\mathbf{V}^H=\mathbf{V}\Lambda_\mathbf{P}^{\tau_1+\tau}\mathbf{V}^H\mathbb{E}[\mathbf{xx}^H]\mathbf{V}\Lambda_\mathbf{P}^{\tau_2-\tau} \mathbf{V}^H,
	\end{equation}
	which can be written as
	\begin{equation} \label{eig}
	(\mathbf{V}^H\mathbb{E}[\mathbf{xx}^H]\mathbf{V})\Lambda_\mathbf{P}^{\tau}=\Lambda_\mathbf{P}^{\tau}(\mathbf{V}^H\mathbb{E}[\mathbf{xx}^H]\mathbf{V}).
	\end{equation}
	If Eq. (\ref{eig}) holds for arbitrary $\mathbf{P}$, 	$(\mathbf{V}^H\mathbb{E}[\mathbf{xx}^H]\mathbf{V})$ should be diagonal, 
which indicates $\mathbb{E}[\mathbf{x}\mathbf{x}^H]=\mathbf{V}\Sigma_\mathbf{x}\mathbf{V}^{H}$. 
Thus, the sufficiency of the condition is proved. 
	
Similarly, we can apply Eq. (\ref{s2}) on both sides of Eq. ($\ref{time}$), 
we can establish the necessity of the condition in Eq. (\ref{s2}). 
\end{proof}

This theorem can be used to estimate the hypergraph spectrum, given multiple observations of several signal points.

\subsubsection{Estimation of Spectrum Components for Point Clouds}
Now, we can use the property of stationary process to estimate the hypergraph spectrum of point clouds.
The three coordinates of a point can be interpreted as three observations of the point from different angles, which describe the underlying multilateral relationship. Thus, we can assume that the point cloud signals follow the stationary process over the estimated underlying hypergraph structure. If the point cloud signals $\mathbf{s}$ follow the hypergraph stationarity, it should satisfy Eq. (\ref{s1}) and Eq. (\ref{s2}). Thus, a spectrum estimation method can be
based on hypergraph staionarity. 
The details of the algorithm is described as follows.

\begin{algorithm}[htbp]
	\begin{algorithmic}[1] 
		\caption{Estimation of Hypergraph Spectrum}\label{basisestimation}
		\STATE {\bf{Input}}: Point cloud dataset $\mathbf{s}=[\mathbf{X_1\quad X_2\quad X_3}]\in\mathbb{R}^{N\times 3}$.
		\STATE Calculate the mean of each row in $\mathbf{s}$, i.e., \\$\mathbf{\overline s}=(\mathbf{X_1+X_2+X_3})/3$;
		\STATE Normalize the original point cloud data as zero-mean in each row, i.e.,
		$\mathbf{s}'=[\mathbf{X_1-{\overline s},X_2-{\overline s},X_3-{\overline s}}]$;
		\STATE Calculate the eigenvectors $\{\mathbf{f}_1,\cdots,\mathbf{f}_N\}$ for $R_{\mathbf{s}'}=\mathbf{s'}(\mathbf{s'}^T)$;
		\STATE {\bf{Output}}: Hypergraph spectrum $\mathbf{V}=[\mathbf{f}_1,\cdots,\mathbf{f}_N]$.
	\end{algorithmic}
\end{algorithm}

With \textit{Theorem 1}, we can directly
obtain an estimation of the hypergraph spectrum based on the 
hypergraph stationarity. Note that, here, we assume all the observations 
are from a clean point cloud without noise.
The case of noisy point clouds will be discussed later in Section \ref{h2}.

\subsection{Estimation of Frequency Coefficients}\label{es}

Next, we discuss how we estimate the hypergraph frequency coefficients with the spectrum components based on the hypergraph smoothness.

In real applications, the large-scale networks are usually sparse, which makes it meaningful to infer that most entries of the hypergraph representing tensor for real datasets are zero \cite{c24}. In addition, the smoothness of signals is a widely-used assumption when estimating the underlying structure of graphs and hypergraphs \cite{c25}. Thus, the estimation of the hypergraph representing 
tensor with known spectrum components for a given dataset $\mathbf{s}$ can be generally formulated as
\begin{align} 
&\min_{\mathbf{\boldsymbol{\lambda}}}\quad \alpha \mbox{Smooth}
(\mathbf{s,\boldsymbol{\lambda}},\mathbf{f}_r)+\beta||\mathbf{A}||_T^2\label{e1}\\
s.t.\quad &\mathbf{A}=\sum_{r=1}^{N}\lambda_r\cdot\underbrace{\mathbf{f}_r\circ...\circ \mathbf{f}_r}_{\text{M times}}. \label{dec}\\
&\quad\mathbf{A}\in \mathcal{A}.\label{cs1}\\
&||\mathbf{A}||_T=\sqrt{\sum_{i_1,i_2,\cdots, i_M=1}^N a_{i_1i_2\cdots i_M}^2}.\label{t_norm}
\end{align}

The constraint set $\mathcal{A}$ in (\ref{cs1}) 
includes the prior information of the representing tensor. 
For example, if the representing
tensor is the adjacency tensor, its entries should be non-negative. 
In the constraint of $(\ref{t_norm})$, $||\mathbf{A}||_T$ is the tensor norm
which controls the sparsity of the hypergraph structure. The smoothness function Smooth$(\mathbf{s,\boldsymbol{\lambda}},\mathbf{f}_r)$ can be designed for specific problems. Typical functions can be hypergraph Laplacian regularization, label ranking, and total variation \cite{c9}. For convenience, we use the quadratic-form total variation based on the supporting matrix to describe the hypergraph smoothness, i.e.,
\begin{equation}
\mathbf{TV}(\mathbf{s})=||\mathbf{s}-(1/\lambda_{max})\mathbf{P}\mathbf{s}||^2_2.
\end{equation}
This form of smoothness function suggested in \cite{c9} can capture the differences between one node and its neighbors over hypergraph. Since the signals are smooth over the estimated hypergraph, 
observations are also smooth. Thus, the final smoothness function for point cloud $\mathbf{s}=[\mathbf{X_1\quad X_2\quad X_3}]$ is 
\begin{align}
\mbox{Smooth}(\mathbf{s,\boldsymbol{\lambda}},\mathbf{f}_r)&=\sum_{i=1}^3||\mathbf{X}_i-\mathbf{P_sX}_i||^2_2\nonumber\\
&=\sum_{i=1}^3||\mathbf{X}_i-\sum_r \sigma_r(\mathbf{f}_r^T\mathbf{X}_i)\mathbf{f}_r||^2_2\nonumber\\
&=\sum_{i=1}^3||\mathbf{X}_i-\mathbf{W}_i\boldsymbol{\sigma}||^2_2,\label{sm1}
\end{align}
where $\mathbf{W}_i=[(\mathbf{f}_1^T\mathbf{X}_i)\mathbf{f}_1\quad(\mathbf{f}_2^T\mathbf{X}_i)\mathbf{f}_2\quad\cdots\quad(\mathbf{f}_N^T\mathbf{X}_i)\mathbf{f}_N]$, $\sigma_r=\lambda_r/\lambda_{max}$ and $\boldsymbol{\sigma}=[\sigma_1\cdots\sigma_N]$.

Moreover, the tensor norm of a given hypergraph has the following property with the frequency coefficients.
\begin{theorem}
	Given a representing tensor $\mathbf{A}=\sum_{r=1}^{N}\lambda_r\cdot\underbrace{\mathbf{f}_r\circ...\circ \mathbf{f}_r}_{\text{M times}}$, the tensor norm $||\mathbf{A}||^2_T=\sum_{i_1,i_2,\cdots,i_M=1}^{N}a_{i_1i_2\cdots i_M}^2$ can be written in the form of frequency coefficients as
	\begin{equation}
	||\mathbf{A}||^2_T=\sum_{r=1}^N \lambda_r^2=\boldsymbol{\lambda}^T\boldsymbol{\lambda},
	\end{equation}
where $\boldsymbol\lambda=[\lambda_1 \quad\lambda_2 \quad\cdots\quad\lambda_N]$.
\end{theorem}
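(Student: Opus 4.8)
The plan is to expand the squared tensor norm $||\mathbf{A}||_T^2$ directly from its definition in (\ref{t_norm}), substitute the decomposition in (\ref{dec}), and then use the orthonormality of the spectrum components $\{\mathbf{f}_r\}$ to collapse the resulting multi-index sum down to a single sum over $r$.

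First I would write the generic entry of $\mathbf{A}$ in coordinates. Writing $f_r(i)$ for the $i$-th component of $\mathbf{f}_r$, the $M$-fold outer product in (\ref{dec}) gives
\begin{equation}
a_{i_1 i_2 \cdots i_M} = \sum_{r=1}^N \lambda_r\, f_r(i_1) f_r(i_2) \cdots f_r(i_M).
\end{equation}
Squaring this and summing over every index yields a double sum over spectral indices $r$ and $s$:
\begin{equation}
||\mathbf{A}||_T^2 = \sum_{i_1,\dots,i_M=1}^N \Big(\sum_{r=1}^N \lambda_r \prod_{k=1}^M f_r(i_k)\Big)^2 = \sum_{r=1}^N \sum_{s=1}^N \lambda_r \lambda_s \sum_{i_1,\dots,i_M=1}^N \prod_{k=1}^M f_r(i_k) f_s(i_k).
\end{equation}
The key manipulation is that, since the indices $i_1,\dots,i_M$ each range independently over $\{1,\dots,N\}$, the innermost sum factorizes into a product of $M$ identical inner products:
\begin{equation}
\sum_{i_1,\dots,i_M=1}^N \prod_{k=1}^M f_r(i_k) f_s(i_k) = \prod_{k=1}^M \Big(\sum_{i_k=1}^N f_r(i_k) f_s(i_k)\Big) = \big(\mathbf{f}_r^{\mathrm{T}}\mathbf{f}_s\big)^M.
\end{equation}

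Finally I would invoke orthonormality. Because $\mathbf{f}_r^{\mathrm{T}}\mathbf{f}_s = \delta_{rs}$, we have $(\mathbf{f}_r^{\mathrm{T}}\mathbf{f}_s)^M = \delta_{rs}$ for every $M \ge 1$, so only the diagonal terms $r=s$ survive and
\begin{equation}
||\mathbf{A}||_T^2 = \sum_{r=1}^N \sum_{s=1}^N \lambda_r \lambda_s \,\delta_{rs} = \sum_{r=1}^N \lambda_r^2 = \boldsymbol{\lambda}^{\mathrm{T}}\boldsymbol{\lambda},
\end{equation}
which is the claimed identity. I do not expect any real obstacle: the only step that deserves care is the factorization of the multi-index sum, which is simply the distributive law for finite sums applied coordinate by coordinate; everything else is bilinearity together with the orthonormality already guaranteed for the hypergraph spectral space. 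As a sanity check, for $M=2$ this reduces to the familiar fact that the squared Frobenius norm of $\sum_r \lambda_r \mathbf{f}_r \mathbf{f}_r^{\mathrm{T}}$ equals $\sum_r \lambda_r^2$.
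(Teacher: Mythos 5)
Your proposal is correct and follows essentially the same route as the paper: expand $a_{i_1\cdots i_M}$ via the rank-one decomposition, square and sum to obtain the double sum over spectral indices, factorize the multi-index sum into $(\mathbf{f}_r^{\mathrm{T}}\mathbf{f}_s)^M$, and invoke orthonormality to keep only the diagonal terms. No gaps; the factorization step you flag as the one deserving care is exactly the step the paper relies on as well.
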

\begin{proof}
	Since $\mathbf{A}=\sum_{r=1}^{N}\lambda_r\cdot\underbrace{\mathbf{f}_r\circ...\circ \mathbf{f}_r}_{\text{M times}}$, we have
	\begin{equation}
	a_{i_1i_2\cdots i_M}=\sum_{r=1}^N \lambda_rf_{r,i_1}f_{r,i_2}\cdots f_{r,i_M},
	\end{equation}
where $f_{r,i}$ is the $i$th element of $\mathbf{f}_r$.
	Then, the tensor norm is 
	\begin{align}
	||\mathbf{A}||^2_T
	&=\sum_{i_1,i_2,\cdots,i_M}(\sum_{r=1}^N \lambda_rf_{r,i_1}f_{r,i_2}\cdots f_{r,i_M})^2\nonumber\\
	&=\sum_{i_1,i_2,\cdots,i_M}(\sum_{r=1}^N \lambda_rf_{r,i_1}\cdots f_{r,i_M})(\sum_{t=1}^N \lambda_tf_{t,i_1}\cdots f_{t,i_M})\nonumber\\
	&=\sum_{i_1,i_2,\cdots,i_M}\sum_{r,t}\lambda_r\lambda_tf_{r,i_1}\cdots f_{r,i_M} f_{t,i_1}\cdots f_{t,i_M}\nonumber\\	
	&=\sum_{r,t}\lambda_r\lambda_t\sum_{i_1,i_2,\cdots,i_M=1}^N(f_{r,i_1}f_{t,i_1})\cdots(f_{r,i_M}f_{t,i_M})\nonumber\\
	&=\sum_{r,t}\lambda_r\lambda_t(\mathbf{f}_r^T\mathbf{f}_t)^M.
	\end{align}
Since $\mathbf{f}_r$ is orthogonal, $\mathbf{f}_r^T\mathbf{f}_t=1$ holds if $r=t$; otherwise, $\mathbf{f}_r^T\mathbf{f}_t=0$. Thus, we obtain $||\mathbf{A}||^2_T=\sum_{r=1}^N \lambda_r^2$.
\end{proof}
This property can help us build a connection from the tensor norm to the frequency coefficients directly.

Now, if we consider the representing tensor as the adjacency tensor and each hyperedge consists of three nodes since at least three nodes are required
to construct a surface, we optimize the normalized 
frequency coefficients  $\boldsymbol{\sigma}=\frac{1}{\lambda_{max}}\boldsymbol{\lambda}=[\sigma_1\quad 
\sigma_2 \quad \cdots\quad \sigma_N]^T$ via
\begin{align}\label{target}
& \min_{\boldsymbol{\sigma}}\quad \alpha\sum_{i=1}^3||\mathbf{X}_i-\mathbf{W}_i\boldsymbol{\sigma}||^2_2+\beta{ \boldsymbol\sigma^T\boldsymbol\sigma}\\
\hspace{-3mm}s.\; t. \; &\;\;0\leq \sigma_r\leq \max_i \sigma_{i}=1,\label{non}\\
&\sum_{r=1}^N \sigma_r f_{r,i_1}f_{r,i_2}f_{r,i_3}\geq 0, \quad i_1,i_2,i_3=1,2,\cdots,N.\label{adj}
\end{align}

The constraint (\ref{adj}) limits the estimated representing tensor as the adjacency tensor. The constraint (\ref{non}) is the nonnegative constraint on weight and the factor \cite{c26}. Clearly, the optimization is non-convex with the constraint 
$\max_i \sigma_{i}=1$. However, if the position of the maximal frequency is known, 
the optimization problem can be solved by tools such as cvx \cite{c27,c28}. 
Thus, we can develop the following algorithm to estimate the frequency coefficients.

\begin{algorithm}[htbp]
	\begin{algorithmic}[1] 
		\caption{Estimation of Frequency Coefficient}\label{freestimation}
		\STATE {\bf{Input}}: Point cloud dataset $\mathbf{s}=[\mathbf{X_1,X_2,X_3}]\in\mathbb{R}^{N\times 3}$, hypergraph spectrum $\mathbf{V}=[\mathbf{f}_1,\cdots,\mathbf{f}_N]$.
		\STATE {\bf{for}} i=1,2,...,iter {\bf{do}}:
		\STATE 	\quad Set $\sigma_i=1$ as the maximal normalized eigenvalue.
		\STATE 	\quad Solve the optimization problem in Eq. (\ref{target}).
		\STATE {\bf{end for}}
		\STATE Find the optimal $i$ to minimize the target function.
		\STATE The optimal coefficients \boldsymbol{$\sigma$} is the solution of Eq. (\ref{target}) correlated to the optimal $i$.
		\STATE {\bf{Output}}: Frequency coefficients {$\boldsymbol\sigma$}.
	\end{algorithmic}
\end{algorithm}

Note that, since we consider clean point cloud
without noise, we usually set parameter $\alpha\ll\beta$. 
Then, from the estimated spectrum pair $(\mathbf{f}_r,\sigma_r)$ under normalization, 
we can recover the original adjacency tensor as Eq. (\ref{dec}). 
Hence, the hypergraph construction process
for a clean point cloud can be summarized as Fig. \ref{fram1}. 
The recovery of original adjacency tensor is not always necessary in practical applications since 
storing the representing tensor is less efficient than storing
the spectrum pairs.
\begin{figure}[t]
	\centering
	\includegraphics[width=3in]{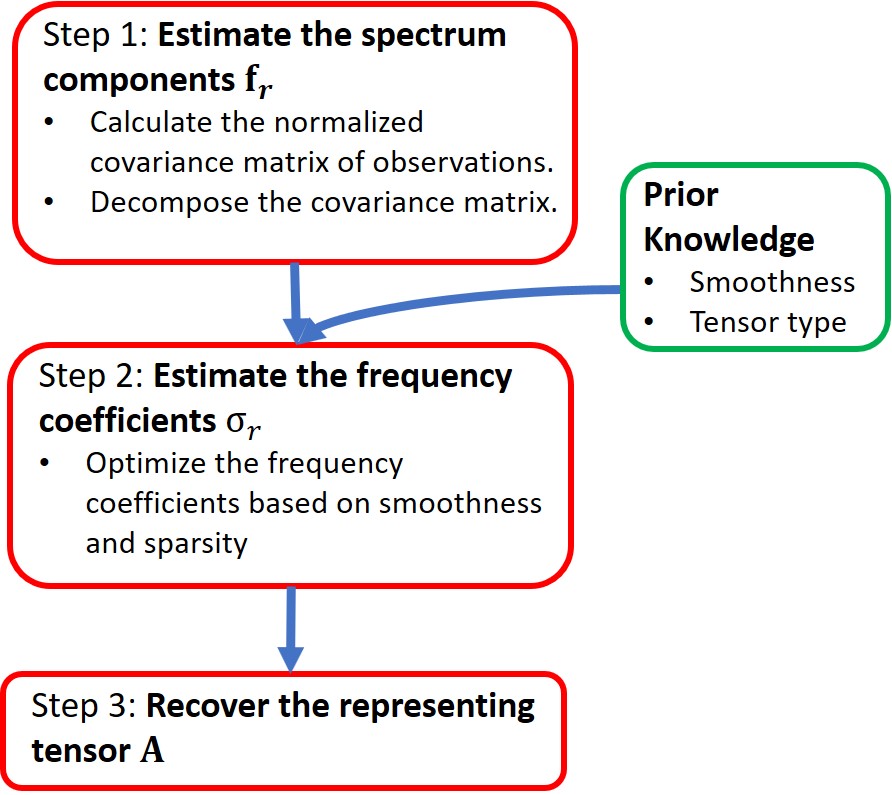}
	\caption{Estimation of Hypergraph Spectral Pairs for Original Point Clouds}
	\label{fram1}
\end{figure}

\section{Joint Spectrum Estimation and Denoising}\label{h2}
In practical 3D imaging, perturbations such as noises and outliers
often exist when generate a point cloud of an unknown object. 
These noises may significantly affect the performance of point cloud processing 
since many existing algorithms require quality datasets~\cite{c27}. 
Thus, denoising remains a vital issue in practical point cloud applications.

Usually, to denoise point sets with sharp features is difficult, 
especially when the noise is large, as such features are hard to
distinguish from noise effect. 
Generally, smoothness-based methods are common. In \cite{c30}, a method 
based on $L_0$ norm of differences between $k$-nearest neighbors 
is introduced. In \cite{c31}, Laplacian regularization is used to 
describe smoothness and to denoise noisy point sets. Other works, 
such as \cite{c8,c29}, minimize the total variation over graphs to 
denoise the point sets. 
Although smoothness-based methods have achieved notable 
successes, how to interpret and define an effective smoothness function 
for a general point set remains open.
Furthermore, for graph-based smoothness methods, the construction of graph model 
remains a critical problem, since traditional methods based on distance 
suffers from the imprecise location measurement. To this end, a 
more general definition of smoothness and a more efficient 
denoising method for arbitrary point clouds are highly desirable.

In this section, we introduce a joint method to simultaneously estimate the hypergraph structure and denoise noisy point clouds. 
In Section \ref{h1}, we already introduce an estimation method of spectral pair $(\mathbf{f}_r,\sigma_r)$ for clean point clouds. A similar construction process can be developed for the noisy point clouds. 
As the estimation of spectrum components only depends on the observed data, we need to denoise the noisy observations while optimizing the frequency coefficients. 
As already discussed, the problem of denoising a signal on a hypergraph can 
be written as a convex minimization problem 
with the constraints that denoised signals should be smooth 
over the hypergraph. Accordingly, the general process of hypergraph denoising and 
estimation can be summarized as the following steps:
\begin{itemize}
	\item Step 1: Estimate the approximated hypergraph spectrum components from the observed noisy point clouds;
	\item Step 2: Jointly estimate frequency coefficients and denoise the noisy observations;
	\item Step 3: Update the noisy observations as denoised data and 
	repeat Step 1 until enough iterations.
\end{itemize}

To estimate hypergraph spectral components of noisy data, 
the process is the same as Algorithm 1 based on hypergraph stationary process. 
To jointly estimate the frequency coefficients to recover the original underlying structure and to denoise the noisy point clouds, 
we propose the following objective. Given $N$ noisy points 
$\mathbf{s}=[\mathbf{X_1\quad X_2\quad X_3}]$, the joint estimation task
can be formulated as

\begin{align}
	\min_{\boldsymbol{\sigma},\mathbf{Y}}\quad& \sum_{i=1}^3[||\mathbf{X}_i-\mathbf{Y}_i||^2_2+\alpha ||\mathbf{X}_i-\mathbf{W}_i\boldsymbol{\sigma}||^2_2]+\beta ||\mathbf{A}||_2^2\label{esfram}\\
	s.t.\quad &\mathbf{A}=\sum_{r=1}^{N}\lambda_r\cdot\underbrace{\mathbf{f}_r\circ...\circ \mathbf{f}_r}_{\text{M times}} \in \mathcal{A},\nonumber\\
 &\;\;0\leq \sigma_r\leq \max_i \sigma_{i}=1, \nonumber\\
	&\mathbf{W}_i=[(\mathbf{f}_1^T\mathbf{X}_i)\mathbf{f}_1\quad(\mathbf{f}_2^T\mathbf{X}_i)\mathbf{f}_2\quad\cdots\quad(\mathbf{f}_N^T\mathbf{X}_i)\mathbf{f}_N].
	\nonumber
\end{align}

The resulting $\mathbf{Y=[Y_1\quad Y_2 \quad Y_3]}$ is the 
denoised point clouds, and $(\alpha, \beta)$ are two positive regularization parameters. The first part in Eq. (\ref{esfram}) lets the denoised point cloud maintain the observed structural features. The second part is the smoothness function derived from Eq. (\ref{sm1}) which adjusts positions of noisy points. 
The third part is the tensor norm regularization to 
control hypergraph sparsity.

The optimization problem of Eq. (\ref{esfram}) is not convex in $\mathbf{Y}$ and $\boldsymbol{\sigma}$. Therefore, similar to \cite{c25}, we split the problem into
two subproblems. For each subproblem, we fix one variable set
to solve the other one. Upon convergence, 
the solution corresponds to a local minimum and not necessarily 
a global minimum. 

We first initialize $\mathbf{Y}$ as the observed signals $\mathbf{X}$ and solve the following problem similar to that in Section \ref{h1}.
  \begin{equation}\label{target1}
\min_{\boldsymbol{\sigma}} \alpha\sum_{i=1}^3||\mathbf{X}_i-\mathbf{W}_i\boldsymbol{\sigma}||^2_2+\beta{ \boldsymbol\sigma^T\boldsymbol\sigma}
\end{equation}
\begin{align}
s.t.\quad
&\;\;0\leq \sigma_r\leq \max_i \sigma_{i}=1, \nonumber\\
&\sum_{r=1}^N \sigma_r f_{r,i_1}f_{r,i_2}f_{r,i_3}\geq 0, \quad i_1,i_2,i_3=1,2,\cdots,N.\nonumber
\end{align}
This problem can be solved similarly to the solution
of clean point cloud with Algorithm 2.

Once the estimated frequency coefficients are found, 
we solve the subproblem of point cloud denoising 
\begin{align}\label{ss2}
	\min_\mathbf{Y} \sum_{i=1}^3[||\mathbf{X}_i-\mathbf{Y}_i||^2_2+\alpha ||\mathbf{X}_i-\mathbf{W}_i\boldsymbol{\sigma}||^2_2],
\end{align}
whose close-form solution for each coordinate is 
\begin{equation}\label{ss3}
	\mathbf{Y}_i=[\mathbf{I}+\alpha(\mathbf{I-P_s})^T(\mathbf{I-P_s})]^{-1}\mathbf{X}_i.
\end{equation}

Note that $\mathbf{P_s}$ is the supporting matrix. We then update the frequency components based on the denoised point clouds, and repeatedly carry out
Step 1 to Step 3 until getting the final solution. 
In practice, we generally observe the convergence within only a few iterations.
The complete algorithm is summarized in Algorithm 3 as shown in Fig. \ref{fram2}.
Unlike for clear point clouds, we emphasize more on the smoothness of signals 
over the hypergraph.
The parameter $\alpha$ can be set larger than used when dealing with clean
point clouds.
\begin{algorithm}[htbp]
	\begin{algorithmic}[1] 
		\caption{Joint Hypergraph Estimation and Point Cloud Denoising}\label{jdenoise}
		\STATE {\bf{Input}}: Noisy observations of point clouds $\mathbf{s}=[\mathbf{X_1,X_2,X_3}]\in\mathbb{R}^{N\times 3}$.
		\STATE {\bf{Initialization}}: Calculate the spectrum components $\mathbf{f}_r$'s from the observed point cloud $\mathbf{s}$ as Algorithm 1.
		\STATE {\bf{for}} i=1,2,...,iter {\bf{do}}:
		\STATE 	\quad Find the optimal $\boldsymbol{\sigma}$ for the first subproblem in Eq. (\ref{target1}) with Algorithm 2.
		\STATE 	\quad Solve the optimization problem in Eq. (\ref{ss2}) with $\mathbf{Y}$ in Eq. (\ref{ss3}).
		\STATE 	\quad Update the observed signals as $\mathbf{Y}$ and recalculate the spectrum components $\mathbf{f}_r$'s.
		\STATE {\bf{end for}}
		\STATE {\bf{Output}}: Spectral pairs $(\mathbf{f}_r,\sigma_r)$'s, denoised point clouds $\mathbf{Y}$.
	\end{algorithmic}
\end{algorithm}
\begin{figure}[htbp]
	\centering
	\includegraphics[width=3in]{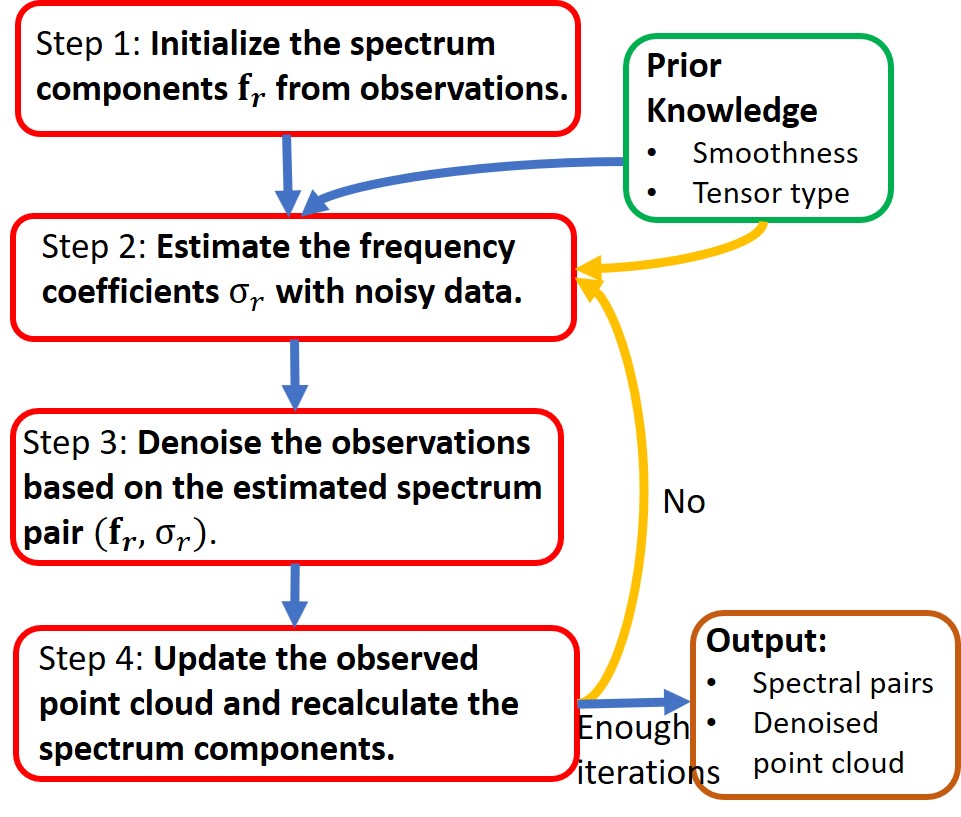}
	\caption{Joint Hypergraph Estimation and Denoising for Noisy Point Cloud.}
	\label{fram2}
\end{figure}

\section{Application Examples}\label{appli}
In this section, we examine two application examples to test the 
efficacy of the proposed method in estimating hypergraph structure for both 
clear and noisy point clouds.

\subsection{Sampling}
Sampling is an important operation to facilitate
analysis of very large point clouds. In this part, we consider 
different sampling strategies depending on different kinds of applications. 
Some interesting connections are found from the hypergraph frequency and 
point cloud features.

\subsubsection{Resampling using Harr-like Highpass Filtering}
Filtering helps extract select features of a given dataset. 
In some applications such as boundary detection, 
accurate extraction of shape features of point clouds is important. 
Thus, an efficient sampling should retain the features of the original point cloud. 
In our estimation of hypergraph structure, smoothness is a significant feature
to model point clouds. Ideally,  smoothness over the original surface of a point cloud
should correspond to smoothness over its hypergraph model.  Therefore, we can also
design a Harr-like high-pass filter to extract sharp features over the surfaces.

Let $\mathbf{I}$ be an identity matrix of appropriate size.  
Similar to that in GSP \cite{c3}, a Haar-like high-pass filter is designed as
\begin{align}
\mathbf{H}
&=\mathbf{I-P_s}\\
&=\mathbf{V}
\begin{bmatrix}
1-\sigma_1&0&\cdots&0\\
0&1-\sigma_2&\cdots&0\\
\vdots&\vdots&\ddots&\vdots\\
0&0&\cdots&1-\sigma_N
\end{bmatrix}\mathbf{V}^T.
\end{align}
The filtered signal is
\begin{equation}
(\mathbf{Hs})_i=\mathbf{s}_i-\sum_j {P_s}_{(ij)}\mathbf{s}_j,
\end{equation}
which reflects the differences between nodes and their 
neighbors over the hypergraph. Note that, the frequency 
coefficients together with their corresponding spectral components are ordered decreasingly here, i.e., $\sigma_{i}\geq\sigma_{i+1}$. From the definition of total variation, more smoothness corresponds to larger total variation. 
Thus, we can extract the sharp features over the point clouds by sampling the nodes with large value of $||\mathbf{s}_i-\sum_j {P_s}_{(ij)}\mathbf{s}_j||^2_2$.

To test this application,
we estimate the spectral pairs for clean point clouds and filter the signals 
over several synthetic datasets. We randomly generate multiple points 
over the surfaces of basic graphics shown as Fig. \ref{ori}, 
and sample the point clouds using the high-pass filter (HPF) given in Fig. \ref{sam}. 
From the test results, we can see that the sampled points of the surfaces 
in Fig. \ref{sur2} mainly congregate near the corners and edges, which are the sharp parts of the point clouds. 
In addition, the sampled nodes for a cube shape
are also crowded near edges and corners. On the other hand, 
the sampled nodes of a cylinder are mostly at the boundaries of
the cylinder. Our test 
results show that the Harr-like HPF can extract sharp features from 
point cloud surfaces, which correspond to the least
smooth parts of the estimated hypergraph. 
Moreover, since the total variation measures the order of frequency, 
sharp features over the point cloud correspond to high frequency components. 
Thus, the hypergraph model and the estimated spectral pairs 
are efficient when extracting features of 3D point clouds.

\begin{figure*}[htbp]
	\centering
	\subfigure[Original Surfaces with 6000 Points.]{
		\label{sur1}		
		\centering
		\includegraphics[height=1.7in]{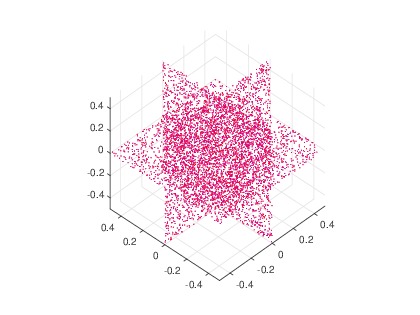}
	}
\hfill
\subfigure[Original Cylinder with 6000 Points.]{
	\label{cyl1}		
	\centering
	\includegraphics[height=1.7in]{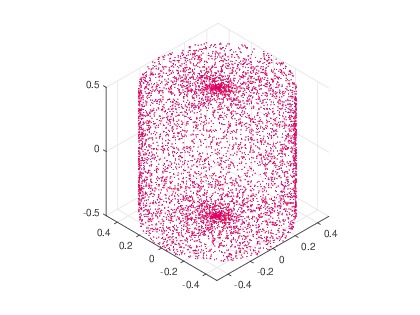}
}
\hfill
\subfigure[Original Cube with 5000 Points.]{
	\label{cub1}
	\centering
	\includegraphics[height=1.7in]{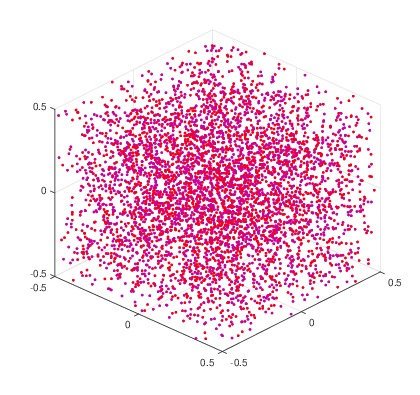}
}
	\centering
\caption{Original Point Clouds.}
\label{ori}
\end{figure*}
\begin{figure*}[htbp]
	\subfigure[Sampled Surfaces with 800 Points.]{
		\label{sur2}
		\centering
		\includegraphics[height=1.7in]{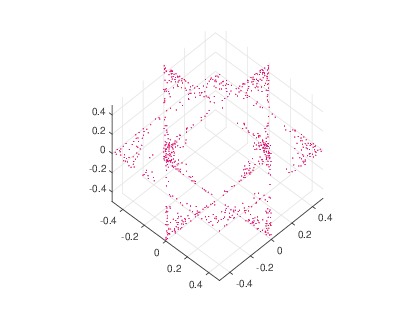}
	}%
\hfill
	\subfigure[Sampled Cylinder with 600 Points.]{
		\label{cyl2}		
		\centering
		\includegraphics[height=1.7in]{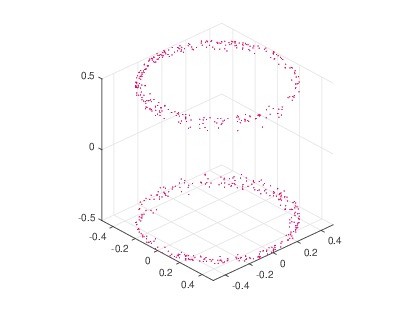}
	}
\hfill	
	\subfigure[Sampled Cube with 500 Points.]{
		\label{cub2}		
		\centering
		\includegraphics[height=1.7in]{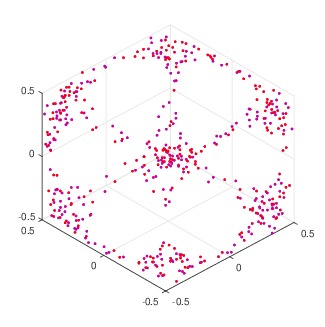}
	}
	\centering
	\caption{Sampled Point Clouds.}
	\label{sam}
\end{figure*}

\begin{figure*}[htbp]
	\subfigure[Sampled Surfaces with 800 Points.]{
	\label{sur2}
	\centering
	\includegraphics[height=1.5in]{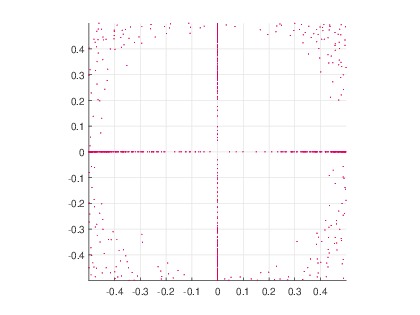}
}%
\hfill
\subfigure[Sampled Cylinder with 600 Points.]{
	\label{cyl2}		
	\centering
	\includegraphics[height=1.5in]{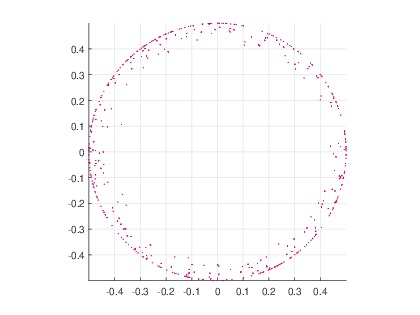}
}
\hfill	
\subfigure[Sampled Cube with 500 Points.]{
	\label{cub2}		
	\centering
	\includegraphics[height=1.5in]{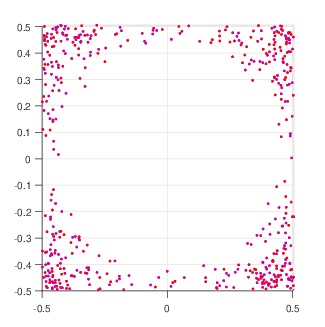}
}
	\caption{View from the Top of Sampled Point Clouds.}
	\label{sam5}
\end{figure*}

\subsubsection{Down-Sampling with Hypergraph Fourier Transform}
Projecting signals into a suitable orthonormal basis is a widely-used sampling method \cite{c32}. The work of \cite{c9} develops a sampling theory based on hypergraph signal processing as follows:
\begin{itemize}
	\item Step 1: Order the spectrum components from low frequency to high frequency based on their total variations.
	\item Step 2: Implement hypergraph Fourier tranform as
	\begin{equation}
		\mathcal{F}(\mathbf{s})=[(\mathbf{f}_1^T\mathbf{s})^{M-1}\quad (\mathbf{f}_2^T\mathbf{s})^{M-1}\quad \cdots\quad (\mathbf{f}_N^T\mathbf{s})^{M-1}]^T.
	\end{equation} 
	\item Step 3: Use $C$ transformed signal components 
	in the hypergraph frequency domain to represent $N$ signals in the original vertex domain.
\end{itemize}

More specifically, for a $K$-bandlimitted hypergraph signal, a perfect recovery is available with $K$ samples in hypergraph frequency domain.  Similarly, we can sample the point clouds based on the hypergraph Fourier transform. To test the performance of the sampled signals, we implement hypergraph Fourier transform (HGFT) on each coordinates of the point clouds, i.e., $\mathcal{F}(\mathbf{X}_i)$ for all $i$. Then, we take
the first $C$ transformed signals in all coordinates. Finally, 
we implement the inverse hypergraph Fourier transform (iHGFT) to obtain the sampled shapes of the original point clouds. Note that, perfect recovery happens 
with $C$ samples, if $(\mathcal{F}(\mathbf{X}_i))_{j+C}=0$ for  $i,j\in {\cal Z}^+$.
\begin{figure*}[t]
	\subfigure[Cat with 3400 points.]{
		\label{cat}
		\centering
		\includegraphics[height=1.4in]{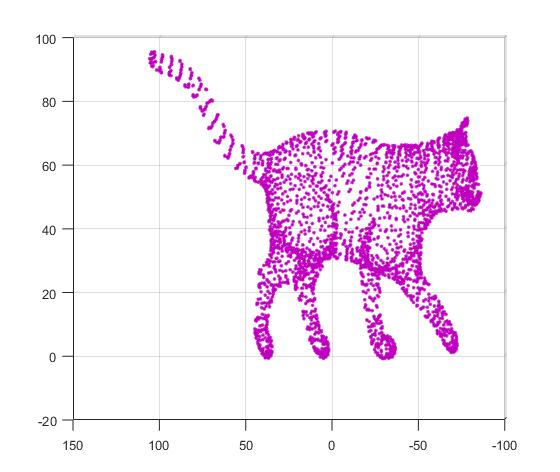}
	}%
	\hfill
	\subfigure[Wolf with 3400 points.]{
		\label{wolf}		
		\centering
		\includegraphics[height=1.4in]{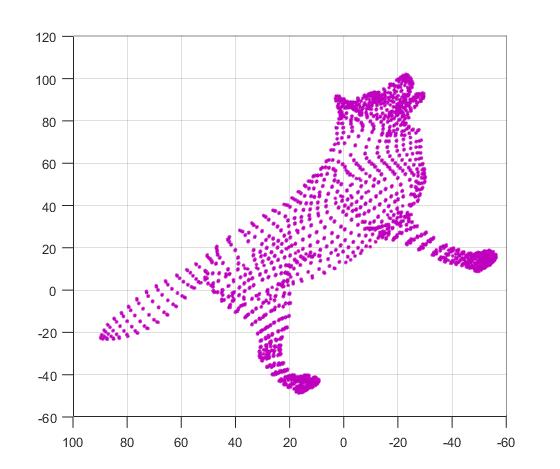}
	}	
	\hfill
	\subfigure[Horse with 3400 points.]{
		\label{hor}		
		\centering
		\includegraphics[height=1.4in]{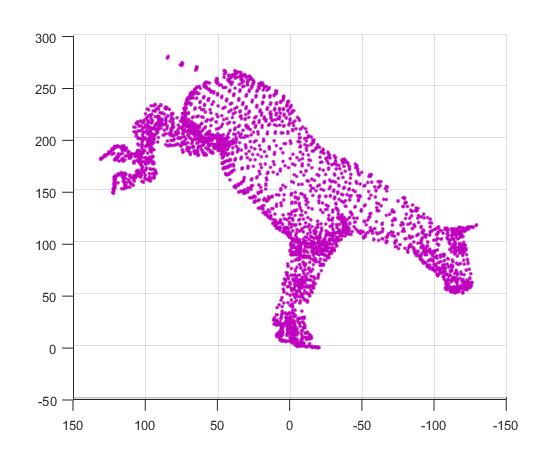}
	}
	\centering
	\caption{Test Datasets of Sampling.}
	\label{sam2}
\end{figure*}

\begin{figure*}[t]
	\subfigure[MSE for cat dataset.]{
		\label{cat1}
		\centering
		\includegraphics[height=1.8in]{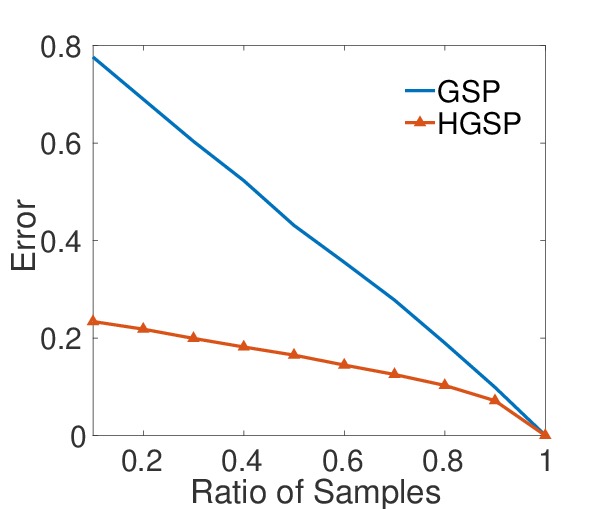}
	}%
	\hfill
	\subfigure[MSE for wolf dataset.]{
		\label{wolf1}		
		\centering
		\includegraphics[height=1.8in]{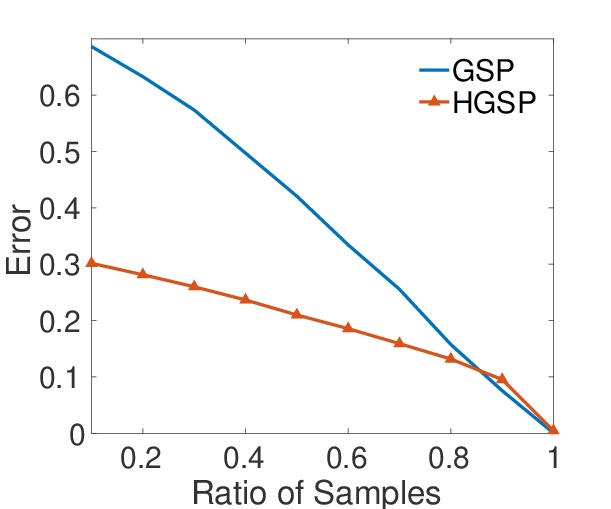}
	}	
	\hfill
	\subfigure[MSE for horse dataset.]{
		\label{hor1}		
		\centering
		\includegraphics[height=1.8in]{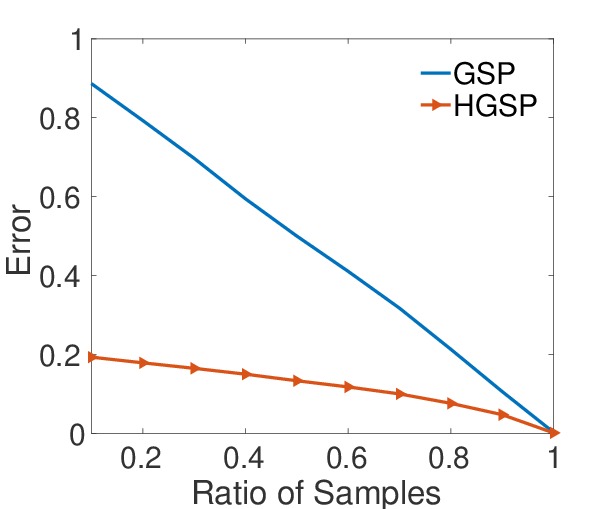}
	}
	\centering
	\caption{Error between Recovered Data and Original Data.}
	\label{sam3}
\end{figure*}
 We test the recovered point clouds for animal point datasets \cite{c33,c34,c35,c36} with the GSP-based methods. For the GSP-based method, we construct the a graph adjacency matrix $\mathbf{W}$ with Guassian model, i.e.,
\begin{equation}\label{adjj}
W_{ij}=\left\{
\begin{aligned}
\exp\left(-\frac{||\mathbf{s}_i-\mathbf{s}_j||^2_2}{\delta^2}\right),&\quad||\mathbf{s}_i-\mathbf{s}_j||^2_2\leq t;\\
0,&\quad \mbox{otherwise},
\end{aligned}
\right.
\end{equation}
where $\mathbf{s}_i$ is the coordinates of the $i$th node.
Then, we sample the point clouds using 
the signals after the graph Fourier transform (GFT). 

The test point cloud is shown as Fig. \ref{sam2}. 
We first compare the mean squared error (MSE)
between the recovered point clouds and original point clouds shown 
as Fig. \ref{sam3}. 
From the experimental results, we can see that the HGSP-based method 
has smaller error than the GSP downsampling method, clearly 
indicating hypergraph to be a better model. 
However, sometimes, MSE alone cannot tell the true story in terms of the performance 
for the recovered point clouds. 
To explore more, we compare the recovered point clouds directly in 
Fig. \ref{sam4}. From the experimental results, we can see that HGSP-based 
method captures the overall structure of the point clouds with very few samples, 
whereas the GSP-based method requires more samples to get sufficient details.
The MSE of GSP mainly stems from some outliers when taking more than 90 percent 
of the samples. 
The experiments show that HGSP-based method is a better tool for applications 
which need to recover an overall shape of point clouds from limited data
storage. Our test shows hypergraph to be a suitable model for point clouds, and the estimated hypergraph spectral pairs capture the point cloud characteristics 
very well.

\subsection{Denoising}
\begin{figure}[t]
	\subfigure[Bunny with 3597 Samples.]{
		\label{tea}		
		\centering
		\includegraphics[height=1.2in]{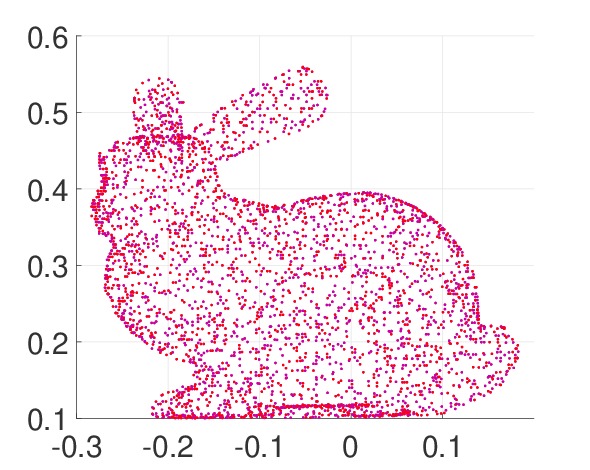}
	}	
	\subfigure[Bunny with 397 Samples.]{
		\label{bun}
		\centering
		\includegraphics[height=1.2in]{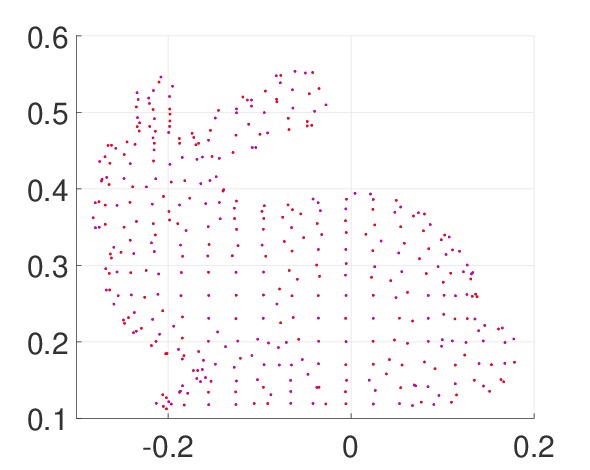}
	}%
	\centering
	\caption{Original Test Data.}
	\label{den}
\end{figure}

From estimated hypergraph spectral pairs from noisy point clouds,
the performance of denoising is an intuitive metric of how good
the estimates are. 
There are multiple methods developed to denoise
noisy point clouds. The authors of \cite{c8} proposed
a graph-based method to denoise based on total variation (GSP-TV). 
This method constructs a graph based on observed coordinates first
before solving the denoising optimization
\begin{equation}
	\min_{\mathbf{Y}} ||\mathbf{X}-\mathbf{Y}||_2^2+\alpha \mathbf{TV}(\mathbf{Y,W}),
\end{equation}
where $\mathbf{X}$ is the observed coordinates, and $\mathbf{W}$ is the adjacency matrix. Here, the graph total variation $\mathbf{TV}(\mathbf{Y,W})$ is applied in describing the smoothness over the graphs. In addition to total variation, Laplacian regularization (LR)
has also been used in denoising with a basic formulation
\begin{equation}
\min_{\mathbf{Y}} ||\mathbf{X}-\mathbf{Y}||_2^2+\alpha ||\mathbf{Y}^T\mathbf{LY}||^2_2,
\end{equation}
where $\mathbf{L}$ is the Laplacian matrix. Developed from traditional Laplacian regularization methods, a mesh Laplacian smooth (MLS) method is 
given in \cite{c37}.
\begin{table}[t]
	\begin{tabular}{|c|c|c|c|c|c|}
		\hline
		& HGSP           & GSP(TV) & MLS    & LR     & Noisy  \\ \hline
		Uniform$\sim$U(-0.03,0.03) & \textbf{32.60} & 45.94  & 56.63  & 48.86  & 63.84  \\ \hline
		Uniform$\sim$U(0.08,0.16)  & \textbf{98.36} & 160.18 & 205.15 & 168.17 & 220.96 \\ \hline
		Guassian$\sim$N(0,0.08)    & \textbf{41.10} & 42.36  & 49.41  & 64.00  & 76.54  \\ \hline
		Guassian$\sim$N(0.02,0.08) & \textbf{73.43} & 76.07  & 83.25  & 123.08 & 142.11 \\ \hline
		Impulse (p=0.08)            & \textbf{34.53} & 45.45  & 50.89  & 40.53  & 60.5   \\ \hline
	\end{tabular}
	\\
	\centering
	\caption{Error in Dfferent Kinds of Noise}
	\label{t1}
\end{table}

\begin{figure*}[htbp]
	\subfigure[GSP-based Sampling with 30\%, 50\%, 70\%, 90\%, 98\% Ratio of Samples for Cat Datasets.]{
		\label{cat2}
		\centering
		\includegraphics[width=7in]{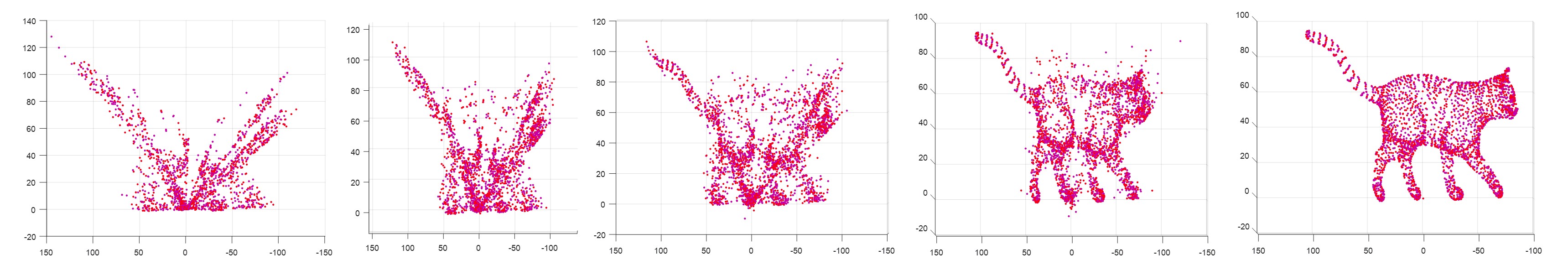}
	}%
	
	\subfigure[HGSP-based Sampling with 30\%, 50\%, 70\%, 90\%, 98\% Ratio of Samples for Cat Datasets.]{
		\label{cat3}		
		\centering
		\includegraphics[width=7in]{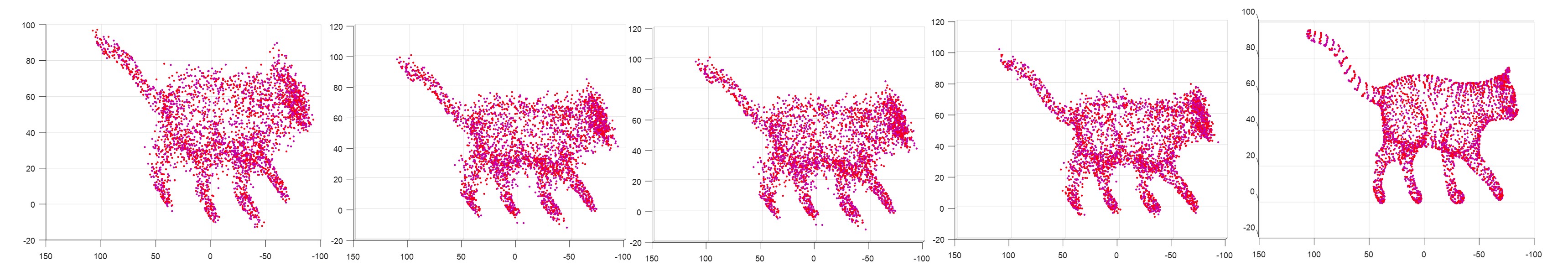}
	}	
	\subfigure[GSP-based Sampling with 30\%, 50\%, 70\%, 90\%, 98\% Ratio of Samples for Horse Datasets.]{
		\label{hor2}
		\centering
		\includegraphics[width=7in]{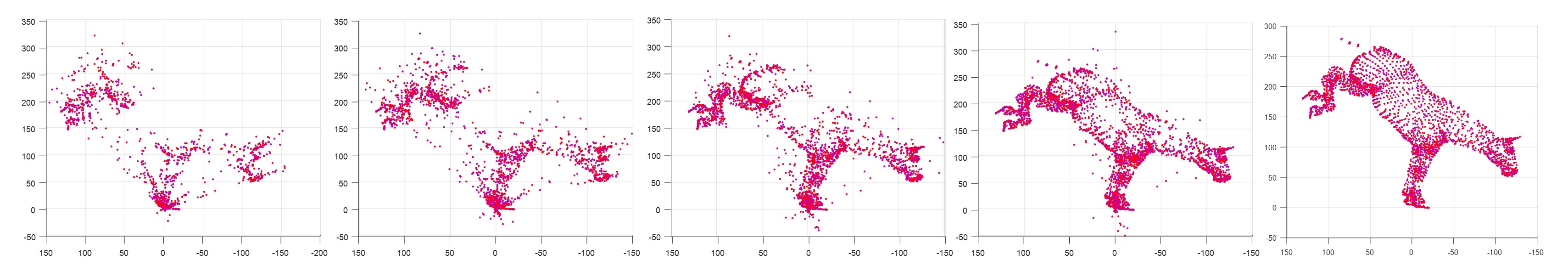}
	}%
	
	\subfigure[HGSP-based Sampling with 30\%, 50\%, 70\%, 90\%, 98\% Ratio of Samples for Horse Datasets.]{
		\label{hor3}		
		\centering
		\includegraphics[width=7in]{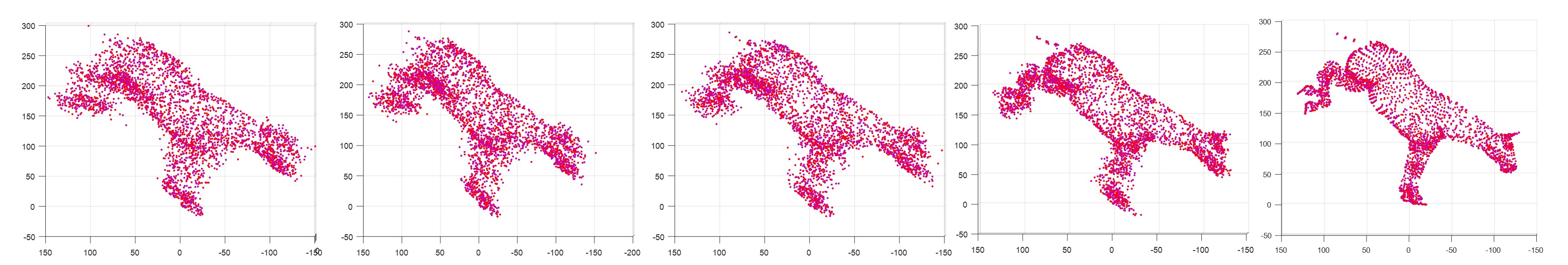}
	}	
	\subfigure[GSP-based Sampling with 30\%, 50\%, 70\%, 90\%, 98\% Ratio of Samples for Wolf Datasets.]{
		\label{wolf2}
		\centering
		\includegraphics[width=7in]{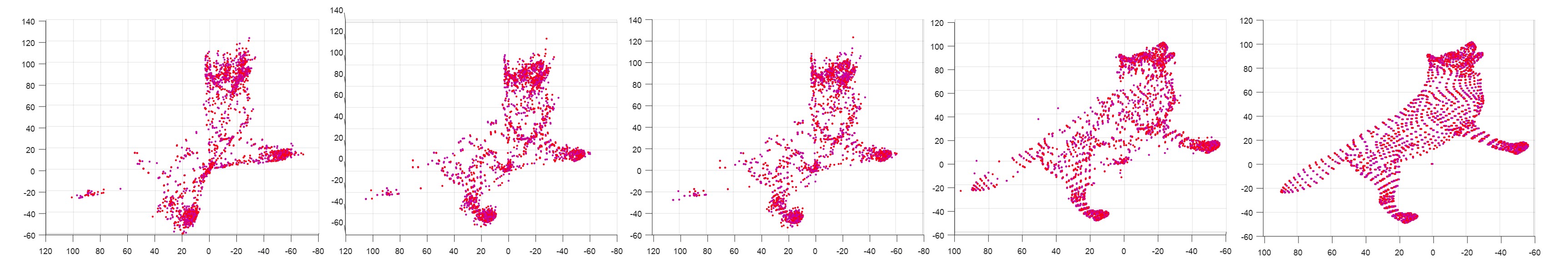}
	}%
	
	\subfigure[HGSP-based Sampling with 30\%, 50\%, 70\%, 90\%, 98\% Ratio of Samples for Wolf Datasets.]{
		\label{wolf3}		
		\centering
		\includegraphics[width=7in]{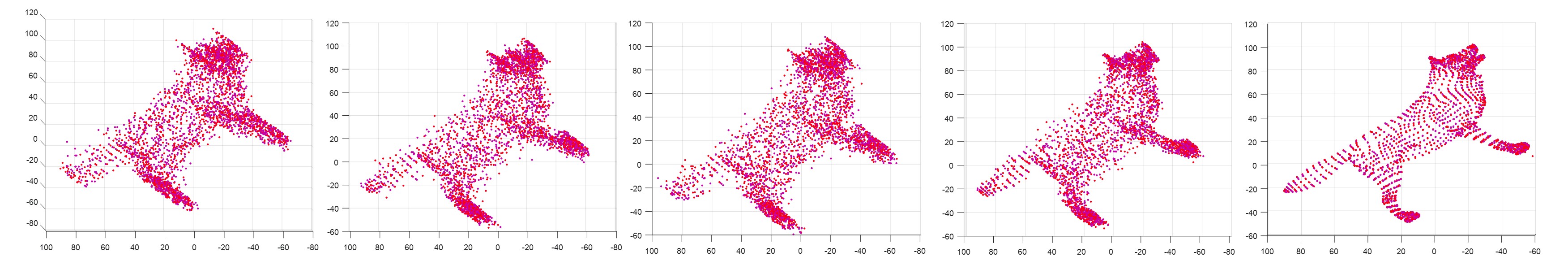}
	}
	\centering
	\caption{Recovered Point Clouds from Sampled Transformed Signals.}
	\label{sam4}
\end{figure*}

\begin{figure*}[t]
	\subfigure[Comparison in Guassian Distribution.]{
		\label{bun11}
		\centering
		\includegraphics[height=2.2in]{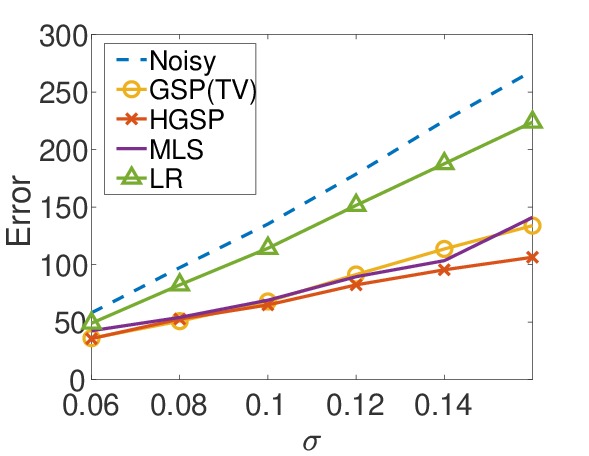}
	}%
	\hspace{0.5in}
	\subfigure[Comparison in Uniform Distribution.]{
		\label{tea11}		
		\centering
		\includegraphics[height=2.2in]{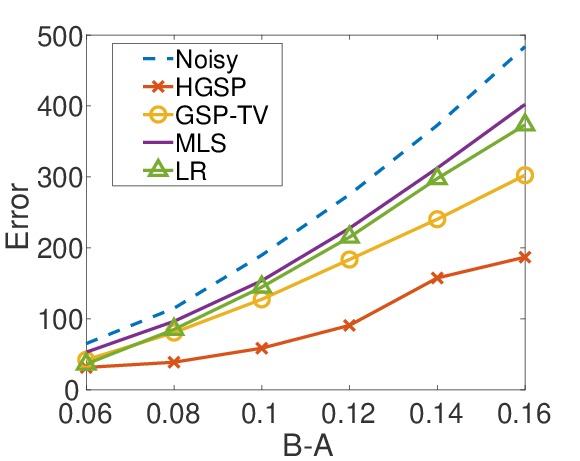}
	}	
	\centering
	\caption{Comparison between Different Methods.}
	\label{den1}
\end{figure*}

\begin{figure*}[t]
	\subfigure[Original Bunny Dataset.]{
		\label{bun2}
		\centering
		\includegraphics[height=1.4in]{original_bunny_small.jpg}
	}%
	\subfigure[Noisy Bunny Dataset.]{
		\label{nb1}		
		\centering
		\includegraphics[height=1.4in]{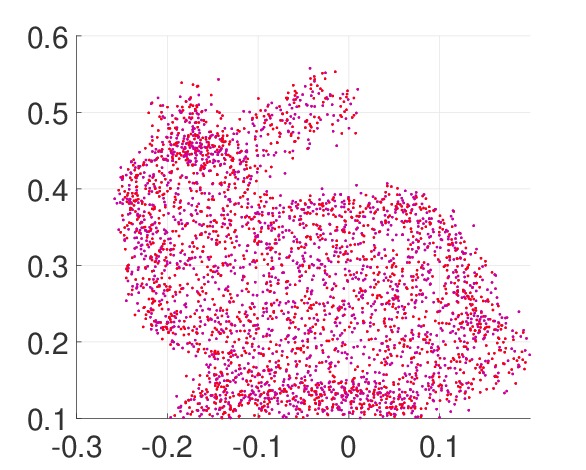}
	}	
	\subfigure[Denoised Bunny Dataset.]{
		\label{tea1}		
		\centering
		\includegraphics[height=1.4in]{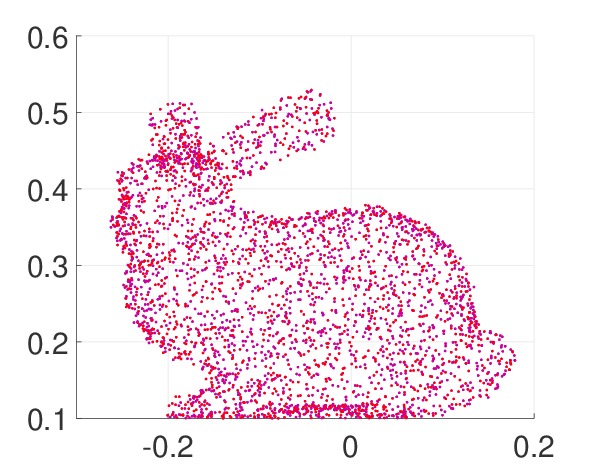}
	}	
	\centering
	\caption{Denoising of Bunny Dataset.}
	\label{den11}
\end{figure*}

To validate the performance of our denoising method, we compare 
with the aforementioned traditional methods using
the Standford bunny dataset with 3595 points and sampled bunny with 397 points shown as Fig. \ref{den}. 
We compare different methods in the sampled bunny dataset adding zero-mean Guassian noise with variance $\sigma^2$, and zero-mean Uniform noise with the interval $B-A$, respectively.
 We use the error denoted by 
\begin{equation}
	Error=\sum_{i=1}^N\sum_{j=1}^3 |X_{ji}-Y_{ji}|,
\end{equation}
where $X_{ij}$ and $Y_{ji}$ are the $j$th coordinates of observed and denoised point $i$, respectively, to measure the performance. 
We repeat the test on 1000 randomly generated noisy data.
The error
between the original dataset and the denoised dataset is shown in Fig. \ref{den1}. 
The error of the noisy point clouds before denoising is also given
as a reference in Fig. \ref{den1}. From the test results, we can see that the 
HGSP-based method can achieve the lowest error, which demonstrates the effectiveness 
of the proposed denoising methods and estimated spectral pairs. The comparison in other types of noise is shown in Table. \ref{t1}. More specifically, the methods based on total variation, i.e, HGSP and GSP-TV, have better performance than the methods based on Laplacian regularization, which indicates the total variation has a more efficient representation of the surface smoothness.
The denoised bunny with 3595 samples is shown in Fig. \ref{den11}, using our proposed 
method to denoise the noisy bunny. The successful recovery of the bunny point cloud
presents a strong evidence that our estimated spectral pairs and denoising method are
powerful tools in processing noisy datasets.

\section{Conclusions and Future Directions}\label{con}

In this work, we develop HGSP tools for effectively processing 3D point clouds.
We first introduce a novel method to estimate hypergraph spectral components 
and presented an optimization formulation to optimally select frequency coefficients to recover the optimal hypergraph structure. 
We develop a HGSP algorithm to jointly estimate hypergraph spectrum pairs
and denoise noisy point clouds. 
To test the practicality and efficacy of our proposed hypergraph tools, 
we study two point cloud  application examples.
Our results illustrate significant performance improvements for both sampling
and denoising applications. Moreover, we establish a clear connection
between hypergraph frequency components and features on point-cloud surface that
can be exploited in future studies. 

Our work establish hypergraph signal processing as an efficient tool 
in tackling high-dimensional interactions among multiple nodes. 
In addition to sampling and denoising, HGSP can 
find good applications in many other aspects of point clouds
through estimation of spectral components and frequency coefficients. 
One direction is the design of filters to analyze the spectral properties and surface features of 3D point clouds. Another interesting problem is the recovery of
point clouds from low dimensional samples. 
Beyond point clouds,  HGSP can also effectively handle 
datasets with other complex underlying structure.

\IEEEpeerreviewmaketitle

\ifCLASSOPTIONcaptionsoff
  \newpage
\fi



\begin{thebibliography}{1}
\bibitem{c1} A. Aldoma, Z. Marton, F. Tombari, W. Wohlkinger, C. Potthast, B. Zeisl, R. B. Rusu, S. Gedikli, and M. Vincze, ``Tutorial: point cloud library: three-dimensional object recognition and 6 dof pose estimation," \textit{IEEE Robotics and Automation Magazine}, vol. 19, no. 3, pp. 80-91, Sep. 2012.

\bibitem{c2} R. B. Rusu and S. Cousins, ``3d is here: Point cloud li-brary (pcl)," in \textit{2011 IEEE International Conference onRobotics and Automation}, Shanghai, China, May 2011, pp. 1–4.

\bibitem{c3} S. Chen, D. Tian, C. Feng, A. Vetro and J. Kovačević, ``Fast Resampling of Three-Dimensional Point Clouds via Graphs," in \textit{IEEE Transactions on Signal Processing}, vol. 66, no. 3, pp. 666-681, Feb, 2018.

\bibitem{c4} J. Ryde, and H. Hu, ``3D mapping with multi-resolution occupied voxel lists"
\textit{Autonomous Robots}, vol. 28, no. 2, pp. 169-185. Apr. 2010.

\bibitem{c5} A. Hornung, K. M. Wurm, M. Bennewitz, C. Stachniss, and W. Burgard,
``Octomap: An efficient probabilistic 3D mapping framework based on
octrees," \textit{Autonomous Robots}, pp. 189–206, Apr. 2013.

\bibitem{c6} J. Peng and C. C. Jay Kuo, ``Geometry-guided progressive lossless 3D
mesh coding with octree (OT) decomposition," \textit{ACM Trans. Graph.
Proceedings of ACM SIGGRAPH}, vol. 24, no. 3, pp. 609–616, Jul. 2005.

\bibitem{c7} S. Arya, D. M. Mount, N. S. Netanyahu, R. Silverman, and A. Y. Wu. ``An optimal algorithm for approximate nearest neighbor searching
fixed dimensions," \textit{Journal of the ACM}, vol.45, np. 6, pp. 891–923, Nov. 1998.

\bibitem{c8} Y. Schoenenberger, J. Paratte, and P. Vandergheynst, ``Graph-based denoising for time-varying point clouds," in \textit{2015 3DTV-Conference: The True Vision-Capture, Transmission and Display of 3D Video (3DTV-CON)}, Lisbon, Portugal, Aug. 2015, pp. 1-4.

\bibitem{c9} S. Zhang, Z. Ding, and S. Cui, ``Introducing hypergraph signal processing: theoretical foundation and practical applications," arXiv:1907.09203, Oct. 2019.

\bibitem{c10} R. B. Rusu, Z. C. Marton, N. Blodow, M. Dolha, and M. Beetz, ``Towards 3D point cloud based object maps for household environments," \textit{Robotics and Autonomous Systems}, vol. 56, no. 11, pp. 927-941, Nov. 2008.

\bibitem{c11} I. Konstas, and M. Lapata, ``Unsupervised concept-to-text generation with hypergraphs," in \textit{Proceedings of the 2012 Conference of the North American Chapter of the Association for Computational Linguistics: Human Language Technologies}, Montreal, Canada, Jun. 2012, pp.752-761.

\bibitem{c16} A. Banerjee, C. Arnab, and M. Bibhash, ``Spectra of general hypergraphs,” \textit{Linear Algebra and its Applications}, vol. 518, pp. 14-30, Dec. 2016.

\bibitem{c17} S. Zhang, H. Zhang, H. Li and S. Cui, ``Tensor-based spectral analysis of cascading failures over multilayer complex systems,” in \textit{Proceedings of 56th Annual Allerton Conference on Communication, Control, and Computing (Allerton)}, Monticello, USA, Oct. 2018, pp. 997-1004.

\bibitem{c18} T. G. Kolda and B. W. Bader, ``Tensor decompositions and applications,” \textit{SIAM Review}, vol. 51, no. 3, pp. 455-500, Aug. 2009.



\bibitem{c19} A. G. Marques, S. Segarra, G. Leus, and A. Ribeiro, ``Stationary graph processes and spectral estimation," \textit{IEEE Transactions
on Signal Processing}, vol. 65, no. 22, pp. 5911-5926, Aug. 2017.

\bibitem{c20} B. Pasdeloup, V. Gripon, G. Mercier, D. Pastor, and M. G. Rabbat, ``Characterization and inference of graph diffusion
processes from observations of stationary signals," \textit{IEEE Transactions on Signal and Information Processing over Networks},
vol. 4, no. 3, pp. 481-496, Aug. 2017.


\bibitem{c21} G. Mateos, S. Segarra, A. G. Marques, and A. Ribeiro, \textit{Connecting the dots: Identifying network structure via graph signal processing}, \textit{IEEE Signal Processing Magazine}, vol. 36, no. 3, pp. 16-43, May 2019.





\bibitem{c123} B. Pasdeloup, V. Gripon, G. Mercier, D. Pastor, and M. G. Rabbat, ``Characterization and inference of graph diffusion processes from observations of stationary signals," \textit{IEEE Transactions on Signal and Information Processing over Networks}, vol. 4, no. 3, pp. 481-496, Aug. 2017.





\bibitem{c22} A. Ortega, P. Frossard, J. Kovacevic, J. M. F. Moura, and P. Vandergheynst, ``Graph signal processing: overview,
challenges, and applications," \textit{Proceedings of the IEEE}, vol. 106, no. 5, pp. 808-828, Apr. 2018.


\bibitem{c23} K. I. Park, \textit{Fundamentals of probability and stochastic processes with applications to communications}. Springer International Publishing, 2018.

\bibitem{c24} S. Segarra, A. G. Marques, G. Mateos, and A. Ribeiro, ``Network topology inference from spectral templates," \textit{IEEE Transactions on Signal and Information Processing over Networks}, vol. 3, no. 3, pp. 467-483, Sep. 2017.

\bibitem{c25} X. Dong, D. Thanou, P. Frossard, and P. Vandergheynst, ``Learning Laplacian matrix in smooth graph signal representations," \textit{IEEE Transactions on Signal Processing}, vol. 64, no. 23, pp. 6160-6173, Dec. 2016.

\bibitem{c27} M. Grant and S. Boyd. ``Graph implementations for nonsmooth convex programs," \textit{Recent Advances in Learning and Control (a tribute to M. Vidyasagar)}.

\bibitem{c28} V. Blondel, S. Boyd, and H. Kimura, editors, pages 95-110, \textit{Lecture Notes in Control and Information Sciences}, Springer, 2008. http://stanford.edu/~boyd/graph\_dcp.html.

\bibitem{c26} A. Afshar, J. C. Ho, B. Dilkina, I. Perros, E. B. Khalil, L. Xiong, and V. Sunderam, ``Cp-ortho: an orthogonal tensor factorization framework for
spatio-temporal data," in \textit{Proceedings of the 25th ACM SIGSPATIAL International Conference on Advances in Geographic Information Systems},
Redondo Beach, CA, USA, Jan. 2017, p. 67.

\bibitem{c30} Y. Sun, S. Schaefer, and W. Wang, ``Denoising point sets via l0 minimization," \textit{Computer Aided Geometric Design}, vol. 35, pp. 2–15, May 2015.

\bibitem{c31} J. Zeng, G. Cheung, M. Ng, J. Pang, and Yang, C. ``3d point cloud denoising using graph laplacian regularization of a low dimensional manifold model," arXiv:1803.07252, 2018.
\bibitem{c29} C. Dinesh, G. Cheung, I. V. Bajic, and C. Yang, ``Fast 3d point cloud denoising via bipartite graph approximation and total variation," arXiv:1804.10831, 2018.

\bibitem{c32} A. Sandryhaila, and J. M. F. Moura, ``Discrete signal processing on
graphs: frequency analysis," \textit{IEEE Transactions on Signal Processing,}
vol. 62, no. 12, pp. 3042-3054, Apr. 2014.
\bibitem{c33} A. M. Bronstein, M. M. Bronstein, and R. Kimmel, ``Numerical geometry of non-rigid shapes", Springer, 2008. ISBN: 978-0-387-73300-5.

\bibitem{c34} A. M. Bronstein, M. M. Bronstein, and R. Kimmel, ``Efficient computation of isometry-invariant distances between surfaces", \textit{SIAM J. Scientific Computing}, vol. 28, no. 5, pp. 1812-1836, 2006.

\bibitem{c35} A. M. Bronstein, M. M. Bronstein, U. Castellani, B. Falcidieno, A. Fusiello, A. Godil, L. J. Guibas, I. Kokkinos, Z. Lian, M. Ovsjanikov, G. Patané, M. Spagnuolo, and R. Toldo, ``SHREC 2010: robust large-scale shape retrieval benchmark", \textit{Proc. EUROGRAPHICS Workshop on 3D Object Retrieval (3DOR)}, 2010.

\bibitem{c36} A. M. Bronstein, M. M. Bronstein, B. Bustos, U. Castellani, M. Crisani, B. Falcidieno, L. J. Guibas, I. Kokkinos, V. Murino, M. Ovsjanikov, G. Patané, I. Sipiran, M. Spagnuolo, and J. Sun, ``SHREC 2010: robust feature detection and description benchmark", \textit{Proc. EUROGRAPHICS Workshop on 3D Object Retrieval (3DOR)}, 2010.

\bibitem{c37} S. W. Cheng, and M. K. Lau, ``Denoising a point cloud for surface reconstruction," arXiv:1704.04038, 2017.
\end{thebibliography}
\end{document}